\newcommand{\dom}{\mathrm{dom}}
\newtheorem{notation}[theorem]{Notation}
\newcommand{\len}{\mathit{len}}
\newcommand{\poly}{\mathsf{poly}}
\title{$\delta$-Complete Decision Procedures for\\Satisfiability~over~the Reals\thanks{This research was sponsored by the National Science Foundation grants no. DMS1068829, no. CNS0926181 and no. CNS0931985, the GSRC under contract no. 1041377 (Princeton University), the Semiconductor Research Corporation under contract no.~2005TJ1366, General Motors under contract no.~GMCMUCRLNV301, and the Office of Naval Research under award no.~N000141010188.}
}
\author{Sicun Gao \and Jeremy Avigad \and Edmund M. Clarke}
\institute{Carnegie Mellon University, Pittsburgh, PA 15213}
\begin{document}
\maketitle

\begin{abstract}
We introduce the notion of ``$\delta$-complete decision procedures'' for solving SMT problems over the real numbers, with the aim of handling a wide range of nonlinear functions including transcendental functions and solutions of Lipschitz-continuous ODEs. Given an SMT problem $\varphi$ and a positive rational number $\delta$, a $\delta$-complete decision procedure determines either that $\varphi$ is unsatisfiable, or that the ``$\delta$-weakening'' of $\varphi$ is satisfiable. Here, the $\delta$-weakening of $\varphi$ is a variant of $\varphi$ that allows $\delta$-bounded numerical perturbations on $\varphi$. We prove the existence of $\delta$-complete decision procedures for bounded SMT over reals with functions mentioned above. For functions in Type 2 complexity class $\mathsf{C}$, under mild assumptions, the bounded $\delta$-SMT problem is in $\mathsf{NP^C}$. This stands in sharp contrast to the well-known undecidability results. $\delta$-Complete decision procedures can exploit scalable numerical methods for handling nonlinearity, and we propose to use this notion as an ideal requirement for numerically-driven decision procedures. As a concrete example, we formally analyze the DPLL$\langle$ICP$\rangle$ framework, which integrates Interval Constraint Propagation (ICP) in DPLL(T), and establish necessary and sufficient conditions for its $\delta$-completeness. We discuss practical applications of $\delta$-complete decision procedures for correctness-critical applications including formal verification and theorem proving.\end{abstract}

\section{Introduction}

Given a first-order signature $\mathcal{L}$ and a structure $\mathcal{M}$, the {\em Satisfiability Modulo Theories} (SMT) problem asks whether a quantifier-free $\mathcal{L}$-formula is satisfiable over $\mathcal{M}$, or equivalently, whether an existential $\mathcal{L}$-sentence is true in $\mathcal{M}$. Solvers for SMT problems have become the key enabling technology in formal verification and related areas. SMT problems over the real numbers are of particular interest, because of their importance in verification and design of hybrid systems, as well as in theorem proving. While efficient algorithms~\cite{linear06} exist for deciding SMT problems with only linear real arithmetic, practical problems normally contain nonlinear polynomials, transcendental functions, and differential equations. Solving formulas with these functions is inherently intractable. Decision algorithms~\cite{collins} for formulas with nonlinear polynomials have very high complexity~\cite{BrownD07}. When the sine function is involved, the SMT problem is undecidable, and only partial algorithms can be developed~\cite{DBLP:journals/lmcs/AvigadF06,DBLP:journals/jar/AkbarpourP10}.

Recently much attention has been given to developing practical solvers that incorporate scalable numerical computations. Examples of numerical algorithms that have been exploited include optimization algorithms~\cite{BorrallerasLNRR09,DBLP:conf/fmcad/NuzzoPSS10}, interval-based algorithms~\cite{HySAT,DBLP:conf/atva/EggersFH08,DBLP:conf/sefm/EggersRNF11,DBLP:conf/fmcad/Gao10}, Bernstein polynomials~\cite{bern}, and linearization algorithms~\cite{cordic}. These solvers have shown promising results on various nonlinear benchmarks in terms of scalability. 

However, for correctness-critical problems, there is always the concern that numerical errors can result in incorrect answers from numerically-driven solvers. For example, safety problems for hybrid systems can not be decided by numerical methods~\cite{andre07}. The problem is compounded by, for instance, the difficulty in understanding the effect of floating-point arithmetic in place of exact computation. There are two common ways of addressing these concerns. One is to use exact versions of the numerical algorithms, replacing floating-point operations by exact symbolic arithmetic~\cite{bern}; the other is to use post-processing (validation) procedures to ensure that only correct results are returned. Both options reduce the full power of numerical algorithms and are usually hard to implement as well. For instance, in the Flyspeck project~\cite{DBLP:conf/dagstuhl/Hales05} for the formal proof of the Kepler conjecture, validating the numerical procedures used in the original proof turns out to be the hardest computational part (and unfinished yet). In general, there has been no framework for understanding the actual performance guarantees of numerical algorithms in the context of decision problems. 

In this paper we aim to fill this gap by formally establishing the applicability of numerical algorithms in decision procedures, and the correctness guarantees they can actually provide. We do this as follows.

First, we introduce ``the $\delta$-SMT problem'' over real numbers, to capture what can in fact be {\em correctly} solved by numerically-driven procedures. Given an SMT formula $\varphi$, and any positive rational number $\delta$, the $\delta$-SMT problem asks for one of the following decisions:
%\vspace{-.1cm}
\begin{itemize}
\item $\mathsf{unsat}$: $\varphi$ is unsatisfiable.
\item $\delta$-$\mathsf{sat}$: The {\em $\delta$-weakening} of $\varphi$ is satisfiable. 
\end{itemize}
%\vspace{-.1cm}
Here, the $\delta$-weakening of $\varphi$ is defined as a numerical relaxation of the original formula. For instance, the $\delta$-weakening of $x=0$ is $|x|\leq\delta$. Note that if a formula is satisfiable, its $\delta$-weakening is always satisfiable. Thus, when a formula is $\delta$-{\sf sat}, either it is indeed satisfiable, or it is unsatisfiable but a $\delta$-perturbation on its numerical terms would make it satisfiable. The effect of this slight relaxation is striking. In sharp contrast to the undecidability of SMT for any signature extending real arithmetic by sine, we show that the bounded $\delta$-SMT problem for a wide range of nonlinear functions is decidable. 
%Here ``boundedness'' means that all the variables are considered within bounded intervals of reals. 
In fact, we show that the bounded $\delta$-SMT problem for the theory with exponentiation and trigonometric functions is $\mathsf{NP}$-complete, and $\mathsf{PSPACE}$-complete for theories with Lipschitz-continuous ODEs. We obtain these results using techniques from computable analysis~\cite{CAbook,vasco}. These results serve as the theoretical basis for our investigation of numerically-driven procedures. 

Next, if a decision algorithm can solve the $\delta$-SMT problem correctly, we say it is ``$\delta$-complete''. We propose to use $\delta$-completeness as the ideal correctness requirement on numerically-driven procedures, replacing the conventional notion of complete solvers (which can never be met in this context). This new notion makes it worthwhile to develop formally analyze numerical methods for decision problems and compare their strength, instead of viewing them as partial heuristics. As an example, we study DPLL$\langle$ICP$\rangle$, the integration of Interval Constraint Propagation (ICP)~\cite{newton} in DPLL(T)~\cite{DPbook}. It is a general solving framework for nonlinear formulas and has shown promising results~\cite{HySAT,DBLP:conf/fmcad/Gao10,DBLP:conf/sefm/EggersRNF11}. We obtain conditions that are sufficient and necessary for the $\delta$-completeness of DPLL$\langle$ICP$\rangle$. 

%We also show that correctness certificates of $\delta$-complete solvers can be produced at run-time and be externally checked, when it is hard to verify the numerical procedures used. 

Further, we show the applicability of $\delta$-complete procedures in correctness-critical practical problems. In bounded model checking~\cite{DBLP:journals/fmsd/ClarkeBRZ01,DBLP:books/daglib/0007403}, using a $\delta$-complete solver we return one of the following answers: either a system is absolutely safe up to some depth ($\mathsf{unsat}$ answers), or it would {\em become unsafe} under some $\delta$-bounded numerical perturbations ($\delta$-{\sf sat} answers). Since $\delta$ can be made very small, in the latter case the algorithm is essentially detecting robustness problems in the system: If a system would be unsafe under some small perturbations, it can hardly be regarded as safe in practice. Similar guarantees can be given for invariant validation and theorem proving. The conclusion is that, under suitable interpretations, the answers of numerically-driven decision procedures can indeed be relied on in correctness-critical applications, as long as they are $\delta$-complete. 
%%\vspace{-.35cm}

{\em Related Work.} Our goal is to provide a formal basis for the promising trend of numerically-driven decision procedures~\cite{BorrallerasLNRR09,DBLP:conf/fmcad/NuzzoPSS10,HySAT,DBLP:conf/atva/EggersFH08,DBLP:conf/sefm/EggersRNF11,DBLP:conf/fmcad/Gao10,bern,cordic}.
% give formally study the use of numerical computations in the context of decision procedures. for their reliable use in correctness-critical problems, such that the power of numerical algorithms can be fully exploited. This aims to serve as a basis for the promising-trend of numerically-driven solvers
Related attempt can be seen in Ratschan's work~\cite{DBLP:journals/jsc/Ratschan02}, in which he investigated the stability of first-order constraints under numerical perturbations. Our approach is, instead, to take numerical perturbations as a given and study its implications in practical applications. Results in this paper are related to our more theoretical results~\cite{gaoextended2} for arbitrarily-quantified sentences, where we do not analyze practical procedures. A preliminary notion of $\delta$-completeness was proposed by us earlier in~\cite{DBLP:conf/fmcad/Gao10} where only polynomials are considered.

The paper is organized as follows. In Section~\ref{smt} and \ref{delta} we define the bounded $\delta$-SMT problem and establish its decidability and complexity. In Section~\ref{dpllicp} we formally analyze DPLL$\langle$ICP$\rangle$ and discuss applications in Section~\ref{app}. 

\section{SMT with Type 2 Computable Functions}\label{smt}
\subsection{Basics of Computable Analysis}

Real numbers can be encoded as infinite strings, and a computability theory of real functions can be developed with oracle machines that perform operations using function-oracles encoding real numbers. This is the approach developed in Computable Analysis or Type 2 Computability~\cite{CAbook,Kobook,vasco}. We briefly review results of importance to us.

Throughout the paper we use $||\cdot||$ to denote $||\cdot||_{\infty}$ over $\mathbb{R}^n$ for various $n$. 

\begin{definition}[Names]
A name of $a\in \mathbb{R}$ is any function $\mathcal{\gamma}_a: \mathbb{N}\rightarrow \mathbb{Q}$ satisfying that $\forall i\in \mathbb{N}, |\gamma_a(i) - a|<2^{-i}.$ For $\vec a\in \mathbb{R}^n$, $\gamma_{\vec a}(i) = \langle \gamma_{a_1}(i), ..., \gamma_{a_n}(i)\rangle$.  
\end{definition}
Thus the name of a real number is a sequence of rational numbers converging to it. For $\vec a\in \mathbb{R}^n$, we write $\Gamma(\vec a) = \{\gamma: \gamma\mbox{ is a name of }\vec a\}$. 

A real function $f$ is computable if there is an oracle Turing machine that can take any argument $x$ of $f$ as a function oracle, and output the value of $f(x)$ up to an arbitrary precision. 
\begin{definition}[Computable Functions]
We say $f:\subseteq\mathbb{R}^n\rightarrow \mathbb{R}$ is computable if there exists a function-oracle Turing machine $\mathcal{M}_f$, outputting rational numbers, such that $\forall \vec x \in \dom(f)\ \forall \gamma_{\vec x}\in \Gamma(\vec x)\ \forall i \in \mathbb{N}, |M_f^{\gamma_{\vec x}}(i) - f(\vec x)|<2^{-i}.$
\end{definition}

In the definition, $i$ specifies the desired error bound on the output of $M_f$ with respect to $f(\vec x)$. For any $\vec x\in \dom(f)$, $M_f$ has access to an oracle encoding the name $\gamma_{\vec x}$ of $\vec x$, and output a $2^{-i}$-approximation of $f(\vec x)$. In other words, the sequence $M_f^{\gamma_{\vec x}}(1), M_f^{\gamma_{\vec x}}(2), ...$ is a name of $f(\vec x)$. Intuitively, $f$ is computable if an arbitrarily good approximation of $f(\vec x)$ can be obtained using any good enough approximation to any $\vec x\in\dom(f)$. A key property of this notion of computability is that computable functions over reals are continuous~\cite{CAbook}. Moreover, over any compact set $D\subseteq \mathbb{R}^n$, computable functions are uniformly continuous with a {\em computable modulus of continuity} defined as follows. 
\begin{definition}[Uniform Modulus of Continuity]
Let $f:\subseteq \mathbb{R}^n\rightarrow \mathbb{R}$ be a function and $D\subseteq\dom(f)$ a compact set. The function $m_f: \mathbb{N}\rightarrow \mathbb{N}$ is called a uniform modulus of continuity of $f$ on $D$ if\ $\forall \vec x,\vec y\in D$, $\forall i\in \mathbb{N}$, $||\vec x-\vec y||<2^{-m_f(i)}$ implies $|f(\vec x)-f(\vec y)|<2^{-i}.$
\end{definition}
\begin{proposition}[\cite{CAbook}]
Let $f:\subseteq\mathbb{R}^n\rightarrow \mathbb{R}$ be computable and $D\subseteq\dom(f)$ a compact set. Then $f$ has a computable uniform modulus of continuity over $D$.
\end{proposition}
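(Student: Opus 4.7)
The plan is to combine two facts: an oracle Turing machine that halts consults its oracle at only finitely many precisions, and compactness lets us bound these precisions uniformly on $D$. This is the standard route through ``name-tightening'' and Lebesgue numbers.

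First, I would fix $i \in \mathbb{N}$ and, for each $\vec x \in D$, pick a ``sharp'' name $\gamma_{\vec x}$ satisfying $|\gamma_{\vec x}(j)-\vec x|<2^{-(j+2)}$ (obtained by shifting any ordinary name of $\vec x$ by two indices). Since $\mathcal{M}_f^{\gamma_{\vec x}}(i+1)$ halts in finite time, it queries the oracle at precisions bounded by some $J(\vec x)\in \mathbb{N}$ and outputs a rational $q_{\vec x}$ with $|q_{\vec x}-f(\vec x)|<2^{-(i+1)}$. I would then prove a locality lemma: if $\vec y\in \dom(f)$ satisfies $||\vec y-\vec x||<2^{-(J(\vec x)+2)}$, then at each queried precision $j\le J(\vec x)$ one has $|\gamma_{\vec x}(j)-\vec y|<2^{-(j+2)}+2^{-(J(\vec x)+2)}\le 2^{-(j+1)}<2^{-j}$, so the rational values $\gamma_{\vec x}(j)$ already extend to a valid name of $\vec y$. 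The oracle machine, seeing the same query answers, produces the same output $q_{\vec x}$, and correctness at $\vec y$ forces $|f(\vec x)-f(\vec y)|<2\cdot 2^{-(i+1)}=2^{-i}$.

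Next, I would invoke compactness of $D$: the open balls $B(\vec x,2^{-(J(\vec x)+2)})$ cover $D$, so there is a finite subcover $B(\vec x_1,r_1),\ldots,B(\vec x_N,r_N)$. Taking a Lebesgue number $\lambda>0$ of this subcover and setting $m_f(i)=\lceil\log_2(2/\lambda)\rceil$ gives the desired modulus: any $\vec x,\vec y\in D$ with $||\vec x-\vec y||<2^{-m_f(i)}$ lie in a common $B(\vec x_k,r_k)$, so applying the locality lemma at both endpoints around the center $\vec x_k$ yields $|f(\vec x)-f(\vec y)|<2^{-i}$.

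The main obstacle is to make $m_f$ genuinely computable rather than merely existent, since the classical covering argument above produces the modulus non-constructively. To effectivize it, one runs $\mathcal{M}_f$ at precision $i+1$ on rational centers enumerated from $\mathbb{Q}^n$, records the query depths actually consumed, and searches for the smallest $m$ such that (a) a finite set of rational balls of radius $2^{-(m+1)}$ covers $D$ and (b) the oscillation bound from the locality lemma holds on each ball. Termination of this search is guaranteed by the classical covering argument, provided $D$ is effectively compact---for instance, a rational box, which is the case relevant to the bounded $\delta$-SMT problems treated later in the paper. This effectivization step is the crux of the proof and is the place where one invokes the standard semi-decision machinery from computable analysis~\cite{CAbook}.
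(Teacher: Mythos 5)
The paper does not actually prove this proposition; it is imported verbatim from the computable-analysis literature (\cite{CAbook}), so there is no in-paper argument to compare against. Judged on its own terms, your classical existence argument is the standard one and is sound: the ``use'' of the oracle in a halting run is finite, so the output of $\mathcal{M}_f$ on a sharp name of $\vec x$ is also a valid $2^{-(i+1)}$-approximation of $f(\vec y)$ for all $\vec y$ in a small ball around $\vec x$, and a Lebesgue number for a finite subcover gives the modulus. (One small phrasing issue: at the end you should route both estimates through the single rational $q_{\vec x_k}$ to get $|f(\vec x)-f(\vec y)|<2\cdot 2^{-(i+1)}=2^{-i}$; literally adding two instances of the lemma's stated conclusion would only give $2^{-(i-1)}$.)

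The genuine gap is in the effectivization, which you yourself identify as the crux but then defer. Two issues. First, as stated for an arbitrary compact $D\subseteq\dom(f)$ the conclusion ``\emph{computable} modulus'' is not obtainable without some effectivity hypothesis on $D$ (recursive/effective compactness); you flag this correctly, and it is harmless for the paper since $D$ is always a rational box there, but it means the proposition needs that reading. Second, and more substantively, the specific search you propose need not terminate as written: you cover $D$ by balls $B(c,2^{-(J(c)+2)})$ centered at rational points $c$, but the query depth $J(c)$ is not continuous in $c$, so as $c$ approaches an irrational $\vec x\in D$ the radii $2^{-(J(c)+2)}$ may shrink faster than $\|c-\vec x\|$, and the union of these rational-centered balls need not contain all of $D$; density of $\mathbb{Q}^n$ alone does not rescue the finite subcover. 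The standard repair is to simulate $\mathcal{M}_f$ on finite dyadic name-\emph{prefixes}: for each dyadic box $Q$ of side $2^{-n}$ feed the machine the center of $Q$ at precisions $j\le n$ and check that it halts without querying beyond $n$; a halting run then certifies the oscillation bound for \emph{every} point of $Q$ (since the prefix extends to a name of each such point), and termination of the search over $n$ follows from K\"onig's lemma on the finitely-branching tree of dyadic boxes rather than from a cover by balls of point-dependent radii. With that substitution your argument becomes the standard proof.
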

Intuitively, if a function has a computable uniform modulus of continuity, then fixing any desired error bound $2^{-i}$ on the outputs, we can compute a {\em global} precision $2^{-m_f(i)}$ on the inputs from $D$ such that using any $2^{-m_f(i)}$-approximation of any $\vec x\in D$, $f(\vec x)$ can be computed within the error bound. 

Most common continuous real functions are computable~\cite{CAbook}. Addition, multiplication, absolute value, $\min$, $\max$, $\exp$, $\sin$ and solutions of Lipschitz-continuous ordinary differential equations are all computable functions. Compositions of computable functions are computable.

Moreover, complexity of real functions can be defined over compact domains. 
\begin{definition}[\cite{Ko92}]
Let $D\subseteq \mathbb{R}^n$ be compact. A real function $f:D\rightarrow\mathbb{R}$ is $\mathsf{P}$-computable ($\mathsf{PSPACE}$-computable), if it is computable by an oracle Turing machine $M_{f}^{\gamma(\vec x)}(i)$ that halts in polynomial-time (polynomial-space) for every $i\in \mathbb{N}$ and every $\vec x\in \dom(f)$. 
\end{definition}

We say $f$ is in Type 2 complexity class $\mathsf{C}$ if it is $\mathsf{C}$-computable. $f$ is $\mathsf{C}$-complete if it is $\mathsf{C}$-computable and $\mathsf{C}$-hard~\cite{Kobook}. If $f:D\rightarrow \mathbb{R}$ is $\mathsf{C}$-computable, then it has a $\mathsf{C}$-computable modulus of continuity over $D$. Polynomials, $\exp$, and $\sin$ are all $\mathsf{P}$-computable functions. A recent result~\cite{Kawamura09} established that the complexity of computing solutions of Lipschitz-continuous ODEs over compact domains is a {\sf PSPACE}-complete problem. 
%\vspace{-.2cm}
\subsection{Bounded SMT over $\mathbb{R}_{\mathcal{F}}$}

We now let $\mathcal{F}$ denote an arbitrary collection of Type 2 computable functions. $\mathcal{L}_{\mathcal{F}}$ denotes the first-order signature and $\mathbb{R}_{\mathcal{F}}$ is the standard structure $\langle \mathbb{R}, \mathcal{F}\rangle$. We can then consider the SMT problem over $\mathbb{R}_{\mathcal{F}}$, namely, satisfiability of quantifier-free $\mathcal{L}_{\mathcal{F}}$-formulas over $\mathbb{R}_{\mathcal{F}}$. We consider formulas whose variables take values from bounded intervals. Because of this, it is more convenient to directly write the bounds on existential quantifiers and express bounded SMT problems as $\Sigma_1$-sentences with bounded quantifiers.
\begin{definition}[Bounded $\Sigma_1$-Sentences]
A bounded $\Sigma_1$-sentence in $\mathcal{L}_{F}$ is
$$\varphi:\ \exists^{I_1}x_1\cdots \exists^{I_n}x_n. \psi(x_1,...,x_n).$$
\begin{itemize}
\item For all $i$, $I_i\subseteq \mathbb{R}$ is a bounded (open or closed) interval with rational endpoints. 
\item Each bounded quantifier $\exists^{I_i}x_i.\phi$ denotes $\exists x_i.(x_i\in I_i\wedge \phi)$. 
\item $\psi(x_1,...,x_n)$ is a quantifier-free $\mathcal{L}_{\mathcal{F}}$-formula, i.e., a Boolean combination of atomic formulas of the form $f(x_1,...,x_n)\circ 0$, where $f$ is a composition of functions in $\mathcal{F}$ and $\circ\in\{<,\leq, >, \geq, =, \neq \}$. 
\item We write $\dom(\varphi)= I_1\times \cdots \times I_n$, and require that all the functions occurring in $\psi(\vec x)$ are defined everywhere over its closure $\overline{\dom(\varphi)}$.
\end{itemize}
We can write a bounded $\Sigma_1$-sentence as $\exists^{\vec I}\vec x.\psi(\vec x)$ for short.
\end{definition}
%A first observation is that any bounded $\Sigma_1$-sentence can be put into the following standard form, where inequalities are implicitly expressed by the bounded quantifiers, and the atomic formulas only involve equalities. 
\begin{lemma}[Standard Form]\label{pre1}
Any bounded $\Sigma_1$-sentence $\varphi$ in $\mathcal{L}_{\mathcal{F}}$ is equivalent over $\mathbb{R}_{\mathcal{F}}$ to a sentence of the following form:
%\vspace{-.2cm}
$$\exists^{I_1}x_1\cdots \exists^{I_n}x_n\;\bigwedge_{i=1}^m(\bigvee_{j=1}^{k_i} f_{ij}(\vec x)=0).$$ 
\end{lemma}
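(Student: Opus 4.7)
The plan is to put the quantifier-free matrix $\psi(\vec x)$ into conjunctive normal form and then replace each literal by an equation, introducing auxiliary bounded existentially quantified variables where necessary. First I would rewrite $\psi(\vec x)$ as $\bigwedge_i \bigvee_j L_{ij}$ by standard propositional manipulations; pushing negations inward only flips the comparison symbol, so each literal remains of the form $f(\vec x)\circ 0$ with $\circ\in\{<,\leq,>,\geq,=,\neq\}$ and $f$ a composition of functions in $\mathcal{F}$.

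Next I would use compactness of $\overline{\dom(\varphi)}$ together with continuity of every $f$ appearing in $\psi$ (which follows from computability, as noted just above the lemma) to fix a rational $M_f > \sup_{\vec x\in\overline{\dom(\varphi)}}|f(\vec x)|$ for each such $f$. Each literal then gets rewritten using a fresh variable $y$: an equality $f=0$ stays; the four inequalities $f\geq 0$, $f>0$, $f\leq 0$, $f<0$ become $\exists^{J}y.\,f(\vec x)-y=0$ with $J$ equal respectively to $[0,M_f]$, $(0,M_f]$, $[-M_f,0]$, $[-M_f,0)$; and $f\neq 0$ unfolds to $\exists^{(0,M_f]}y_1\,\exists^{[-M_f,0)}y_2.\,(f(\vec x)-y_1=0\,\vee\,f(\vec x)-y_2=0)$. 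Every auxiliary quantifier ranges over a bounded rational interval as the definition requires. Finally I would pull all new existential quantifiers to the front using the standard prenex equivalences, which are legal because every auxiliary variable is fresh, yielding a bounded $\Sigma_1$-sentence in the required form.

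The step I expect to be subtle is the reduction of strict inequalities and disequalities: their solution sets are open, whereas the zero set of a composition of continuous functions is closed, so these literals cannot be rendered as pure equations while pointwise preserving the set of witnesses. The lemma nonetheless goes through because it only asserts equivalence of sentences (agreement of satisfiability, not of witness sets) and because the definition of bounded quantifier allows half-open rational intervals --- so the auxiliary-variable trick encodes strictness in the quantifier range rather than in the equation itself.
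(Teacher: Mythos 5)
Your proposal is correct and follows essentially the same route as the paper: inequalities are turned into equations via fresh bounded existential slack variables, with strictness encoded by the half-open endpoint of the slack interval and the bound obtained from continuity of the (computable) functions on the compact closure of the domain. The only cosmetic difference is that you use negative intervals for $\leq$ and $<$ where the paper first negates the function to reduce to $\geq$ and $>$.
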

%\vspace{-.6cm}
\begin{proof}
Assume that $\varphi$ is originally $\exists^{\vec I}\vec x\;\bigwedge_{i=1}^m (\bigvee_{j=1}^{k_i} g_{ij}(\vec x)\circ 0), \mbox{ where }\circ\in\{<,\leq, >, \geq, =,\neq\}.$ We apply the following transformations:

1. {\bf (Eliminate $\neq$)} Substitute each atomic formula of the form $g_{ij}\neq 0$ by $g_{ij}<0\vee g_{ij}>0$.

2. {\bf (Eliminate $\leq, <$)} Substitute $g_{ij}\leq 0$ by $-g_{ij}\geq 0$, and $g_{ij}<0$ by $-g_{ij}>0$. Now the formula is rewritten to $\exists^{\vec I}\vec x. \bigwedge_{i=1}^m (\bigvee_{j=1}^{k_i} g'_{ij}(\vec x)\circ 0), \mbox{ where } \circ\in\{>, \geq, = \}.$ ($g'_{ij} = -g_{ij}$ if the inequality is reversed; otherwise $g'_{ij}=g_{ij}$.)

3. {\bf (Eliminate $\geq, >$)} Substitute $g'_{ij}\geq 0$ (or $g'_{ij}>0$) by $g'_{ij}-v_{ij}=0,$ where $v_{ij}$ is a newly introduced variable, and add an innermost bounded existential quantifier $\exists v_{ij}\in I_{v_{ij}}$, where $I_{v_{ij}} = [0, m_{v_{ij}}]$ ($I_v=(0,m_{v_{ij}}]$). Here, $m_{v_{ij}}\in \mathbb{Q}$ is any value greater than the maximum of $g'_{ij}$ over $\overline{\dom(\varphi)}$. Note that such maximum of $g'_{ij}$ always exists over $\overline{\dom(\varphi)}$, since $g'_{ij}$ is continuous on $\overline{\dom(\varphi)}$, which is a compact, and is computable~\cite{Kobook}. 

The formula is now in the form $\exists^{\vec I}\vec x\exists^{\vec I_{\vec v}} \vec v.\ \bigwedge_{i=1}^m (\bigvee_{j=1}^{k_i} f_{ij}(\vec x, \vec v)=0),$ where $f_{ij} = g'_{ij}-v_{ij}$ if $v_{ij}$ has been introduced in the previous step; otherwise, $f_{ij} = g'_{ij}$. The new formula is in the standard form and equivalent to the original.
\qed\end{proof}
\begin{example}
A standard form of $\exists^{[-1,1]} x\exists^{[-1,1]} y\exists^{[-1,1]} z\;(e^z<x\rightarrow y<\sin(x))$ is $\exists^{[-1,1]} x\exists^{[-1,1]} y \exists^{[-1,1]} z \exists^{[0, 10]} u \exists^{(0,10]} v\;(e^z- x - u = 0) \vee (\sin(x)- y - v = 0).$
\end{example}
%\begin{remark}\label{max}
%Computing the exact maximum of $f$ over a bounded domain is hard. However, exactness is not needed in the transformations since $m_{v_{ij}}$ can be {\em any} value greater than the maximum and any loose bound on $f$ would work. In any case, to avoid ambiguity, our results will always be claimed for input formulas given in the standard form.
%\end{remark}

Recall that we allow the interval bounds on variables to be either open or closed. Let $\overline{S}$ and $S^o$ denote the closure and interior of any set $S$ over the reals. Based on our need we can consider the closure or the interior of the domains in a $\Sigma_1$-sentence. 
\begin{definition}[Closure and Interior]
Let $\varphi:= \exists^{I_1}x_1\cdots\exists^{I_n}x_n. \psi(\vec x)$ be a bounded $\Sigma_1$-sentence in $\mathcal{L}_{\mathcal{F}}$, we define the closure and interior of $\varphi$ as:
\begin{align*}
\overline{\varphi} &:= \exists^{\overline{I_1}} x_1\cdots\exists^{\overline{I_n}}x_n. \psi(\vec x) &\mathrm{(Closure)}\\
{\varphi}^o &:= \exists^{I_1^o} x_1\cdots\exists^{I_n^o}x_n. \psi(\vec x) &\mathrm{(Interior)}
\end{align*}
\end{definition}
\begin{proposition}
For any $\Sigma_1$-sentence $\varphi$, $\varphi^o\rightarrow \varphi$ and $\varphi\rightarrow \overline{\varphi}$.
\end{proposition}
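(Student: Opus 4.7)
The plan is to observe that both implications are immediate monotonicity facts about existential quantification over nested domains, once we check that the atomic formulas in $\psi(\vec x)$ make sense at the witnesses.

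First I would note the set-theoretic inclusions $I_i^o \subseteq I_i \subseteq \overline{I_i}$ for each $i \in \{1,\ldots,n\}$, which hold by the definitions of interior and closure for intervals in $\mathbb{R}$. Consequently $\dom(\varphi^o) = \prod_i I_i^o \subseteq \prod_i I_i = \dom(\varphi) \subseteq \prod_i \overline{I_i} = \overline{\dom(\varphi)} = \dom(\overline{\varphi})$.

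For $\varphi^o \rightarrow \varphi$, suppose $\varphi^o$ holds, so there exist $a_1,\ldots,a_n$ with $a_i \in I_i^o$ for each $i$ and $\psi(a_1,\ldots,a_n)$ true in $\mathbb{R}_{\mathcal{F}}$. Since $I_i^o \subseteq I_i$, the same tuple witnesses $\varphi$; the evaluation of $\psi$ is unchanged because it depends only on the values $a_i$, not on which quantifier bounds they were drawn from. The argument for $\varphi \rightarrow \overline{\varphi}$ is symmetric, with the only additional point being that when we relax $I_i$ to $\overline{I_i}$, a witness $\vec a \in \dom(\varphi)$ still lies in $\overline{\dom(\varphi)}$, and by the defining assumption on bounded $\Sigma_1$-sentences, every function symbol appearing in $\psi$ is defined on $\overline{\dom(\varphi)}$, so $\psi(\vec a)$ retains its truth value.

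There is no real obstacle here; the statement is essentially a definitional unpacking. The only point requiring any care is the well-definedness clause in the direction $\varphi \rightarrow \overline{\varphi}$, but this is handled up front by the hypothesis that the functions in $\psi$ extend to $\overline{\dom(\varphi)}$, which was built into the definition of a bounded $\Sigma_1$-sentence precisely to make such relaxations legitimate.
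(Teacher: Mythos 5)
Your argument is correct and is exactly the definitional unpacking the paper has in mind: the proposition is stated without proof precisely because it follows immediately from the inclusions $I_i^o \subseteq I_i \subseteq \overline{I_i}$ and the monotonicity of bounded existential quantification, with the well-definedness of $\psi$ on $\overline{\dom(\varphi)}$ guaranteed by the definition of bounded $\Sigma_1$-sentences, as you note.
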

\section{The Bounded $\delta$-SMT Problem}\label{delta}
The key for bridging numerical procedures and SMT problems is to introduce syntactic perturbations on $\Sigma_1$-sentences in $\mathcal{L}_{\mathcal{F}}$. 

\begin{definition}[$\delta$-Weakening and Perturbations]\label{weak-def}
Let $\delta\in \mathbb{Q}^+\cup\{0\}$ be a constant and $\varphi$ be a $\Sigma_1$-sentence in standard form:
%\vspace{-.3cm}
\[\varphi:= \exists^{\vec I}\vec x.\bigwedge_{i=1}^m (\bigvee_{j=1}^{k_i} f_{ij}(\vec x)= 0).
%\vspace{-.3cm}
\]
The $\delta$-weakening of $\varphi$ defined as:
%\vspace{-.3cm}
\[\varphi^{\delta}:= \exists^{\vec I} \vec x.\bigwedge_{i=1}^m(\bigvee_{j=1}^k |f_{ij}(\vec x)|\leq \delta).\]
Also, a $\delta$-perturbation is a constant vector $\vec c = (c_{11},...,c_{mk_m})$, $c_{ij}\in\mathbb{Q}$, satisfying $||\vec c||\leq\delta$, such that the $\vec c$-perturbed form of $\varphi$ is given by:
%\vspace{-.2cm}
\[\varphi^{\vec c}:= \exists^{\vec I} \vec x.\bigwedge_{i=1}^m(\bigvee_{j=1}^k f_{ij}(\vec x) = c_{ij}).\]
\end{definition}

\begin{proposition} 
$\varphi^{\delta}$ is true iff there exists a $\delta$-perturbation $\vec c$ such that $\varphi^{\vec c}$ is true. In particular, $\vec c$ can be the zero vector, and thus $\varphi\rightarrow\varphi^{\delta}$. 
\end{proposition}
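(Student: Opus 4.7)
The plan is to prove the biconditional in both directions and then read off $\varphi\to\varphi^{\delta}$ as the instance $\vec c=\vec 0$. The backward direction amounts to bounding $|c_{ij}|$ by the norm $\|\vec c\|\le\delta$, so the real content sits in the forward direction.

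For the ``$\Leftarrow$'' direction, suppose $\vec c$ is a $\delta$-perturbation and $\vec x\in\dom(\varphi)$ witnesses $\varphi^{\vec c}$, so for every clause $i$ some disjunct satisfies $f_{ij}(\vec x)=c_{ij}$. Since $|c_{ij}|\le\|\vec c\|\le\delta$, the same $\vec x$ and the same $j$ immediately give $|f_{ij}(\vec x)|\le\delta$, so $\vec x$ also witnesses $\varphi^{\delta}$.

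For the ``$\Rightarrow$'' direction, suppose $\varphi^{\delta}$ holds with witness $\vec x$, and for each clause $i$ fix an active index $j_i$ with $|f_{i,j_i}(\vec x)|\le\delta$. The natural construction is to set $c_{i,j_i}:=f_{i,j_i}(\vec x)$ and $c_{i,j}:=0$ for $j\ne j_i$; then $\|\vec c\|\le\delta$ and the same $\vec x$ satisfies the $j_i$-th disjunct of every clause in $\varphi^{\vec c}$. The one subtle point is that the definition of $\delta$-perturbation demands $c_{ij}\in\mathbb{Q}$, while $f_{i,j_i}(\vec x)$ is in general irrational. I would repair this by a continuity/density step: first push $\vec x$ into $\dom(\varphi)^o$ so that every active inequality becomes strict, $|f_{i,j_i}(\vec x)|<\delta$; then, using continuity of the $f_{i,j_i}$ (they are computable, hence continuous) together with the intermediate value theorem along suitable directions, move to a nearby $\vec x'\in\dom(\varphi)^o$ at which each active $f_{i,j_i}(\vec x')$ is rational and still lies in $(-\delta,\delta)$. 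Finally, the ``in particular'' clause is immediate: $\vec c=\vec 0$ is trivially a $\delta$-perturbation and $\varphi^{\vec 0}$ is syntactically $\varphi$, so the biconditional yields $\varphi\to\varphi^{\delta}$. I expect the rationality/boundary cleanup in the forward direction to be the only real obstacle, but it is a standard continuity-plus-density argument.
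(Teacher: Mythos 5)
Your backward direction and the ``in particular'' clause are correct, and you have correctly located the one nontrivial issue: Definition~\ref{weak-def} requires $c_{ij}\in\mathbb{Q}$, whereas the natural assignment $c_{i,j_i}:=f_{i,j_i}(\vec x)$ is in general irrational. (The paper states this proposition without proof, so there is no argument to compare against.) The problem is that your continuity-plus-density repair does not close this gap, and under the literal definition no repair can. First, an active $f_{i,j_i}$ may be locally, or globally, constant with an irrational value: with $\mathcal{F}\supseteq\{\sin\}$ the sentence $\exists^{[0,1]}x.\ \sin(1)=0$ has a true $\delta$-weakening for $\delta=1$ (since $|\sin 1|<1$), yet $\sin(1)=c$ is unsatisfiable for every $c\in\mathbb{Q}$; no perturbation of the witness helps, because the function never takes a rational value. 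Second, and more fundamentally, all active disjuncts must become rational at \emph{one and the same} point $\vec x'$. Take $\varphi:\exists^{[-1,1]}x.\,(x=0\wedge x+\sin(1)=0)$ with $\delta=1$: the point $x=0$ witnesses $\varphi^{\delta}$, but $(x'+\sin 1)-x'=\sin 1$ is irrational for every $x'$, so the two left-hand sides are never simultaneously rational and no rational $\vec c$ with $||\vec c||\leq\delta$ makes $\varphi^{\vec c}$ true. Your intermediate-value argument moves one function at a time toward a rational value and cannot enforce this simultaneity.

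The conclusion is that the forward implication holds only if $\delta$-perturbations are taken to be real vectors $\vec c$ with $||\vec c||\leq\delta$; the restriction $c_{ij}\in\mathbb{Q}$ in the definition is incompatible with the proposition as stated and should be treated as incidental to it. Once $\vec c$ is allowed to be real, your ``natural construction'' --- $c_{i,j_i}:=f_{i,j_i}(\vec x)$ for the active disjuncts and $c_{ij}:=0$ otherwise, reusing the same witness $\vec x$ --- already completes the forward direction, and none of the interior, boundary, or density manipulation is needed. So the correct move is not to strengthen the density argument but to drop the rationality requirement (or, if rationality must be kept, to weaken the proposition to the backward implication together with $\varphi\rightarrow\varphi^{\delta}$).
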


We now define the bounded $\delta$-SMT problem. We follow the convention that SMT solvers return sat/unsat, which is equivalent to the corresponding $\Sigma_1$-sentence being true/false. 

%Following conventional terminology, we still regard an SMT problem as a quantifier-free formula $\varphi(\vec x)$, and explicitly write $\exists^{\vec I}\vec x.\varphi(\vec x)$ to denote the corresponding $\Sigma_1$-sentence. 

\begin{definition}[Bounded $\delta$-SMT in $\mathcal{L}_{\mathcal{F}}$] Let $\mathcal{F}$ be a finite collection of Type 2 computable functions. Let $\varphi$ be a bounded $\Sigma_1$-sentence in $\mathcal{L}_{\mathcal{F}}$ in standard form. The {\em bounded $\delta$-SMT problem} asks for one of the following two decisions on $\varphi$:
\begin{itemize}
\item $\mathsf{unsat}:$ $\varphi$ is false.
\item $\delta$-$\mathsf{sat}:$ $\varphi^{\delta}$ is true. 
\end{itemize}
When the two cases overlap, either decision can be returned. 
\end{definition}

Our main theoretical claim is that the bounded $\delta$-SMT problem is decidable for $\delta\in \mathbb{Q}^+$. This is essentially a special case of our more general results for arbitrarily-quantified $\mathcal{L}_{\mathcal{F}}$-sentences~\cite{gaoextended2}. However, different from~\cite{gaoextended2}, here we defined the standard forms of SMT problems to contain only equalities in the matrix, on which the original proof does not work directly. Also, in \cite{gaoextended2} we relied on results from computable analysis that are not needed here. We now give a direct proof for the decidability of $\delta$-SMT and analyze its complexity. 

\begin{theorem}[Decidability] Let $\mathcal{F}$ be a finite collection of Type 2 computable functions and $\delta\in \mathbb{Q}^+$. The bounded $\delta$-SMT problem in $\mathcal{L}_{\mathcal{F}}$ is decidable.  
\end{theorem}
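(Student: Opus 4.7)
The plan is to describe a terminating grid-search decision procedure. First I apply Lemma~\ref{pre1} to put $\varphi$ in standard form $\exists^{\vec I}\vec x.\bigwedge_{i=1}^m\bigvee_{j=1}^{k_i} f_{ij}(\vec x)=0$, so the search domain $\overline{\dom(\varphi)}$ is a compact rational box on which every $f_{ij}$ is total. Three ingredients drive the argument: (i) each $f_{ij}$, being Type 2 computable, admits a computable uniform modulus of continuity on $\overline{\dom(\varphi)}$ by the modulus-of-continuity proposition above; (ii) taking the maximum of the finitely many moduli gives a single $m_0\in\mathbb{N}$ such that $\|\vec x-\vec y\|<2^{-m_0}$ implies $|f_{ij}(\vec x)-f_{ij}(\vec y)|<\delta/4$ for all $i,j$; and (iii) since each $f_{ij}$ is computable, a rational approximation $\tilde f_{ij}(\vec p)$ accurate to $\delta/4$ can be produced at any rational $\vec p\in \overline{\dom(\varphi)}$ in finitely many steps.

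Given such an $m_0$, I would fix a finite rational grid $G\subseteq\vec I$ of mesh at most $2^{-m_0}$, i.e.\ every point of $\vec I$ lies within $\|\cdot\|_\infty$-distance $2^{-m_0}$ of some $\vec p\in G$. For each $\vec p\in G$ the algorithm computes all $\tilde f_{ij}(\vec p)$ and checks the purely rational test $\bigwedge_{i}\bigvee_{j}|\tilde f_{ij}(\vec p)|\le 3\delta/4$. If the test succeeds at some $\vec p$, return $\delta$-$\mathsf{sat}$; otherwise return $\mathsf{unsat}$. For soundness of the $\delta$-$\mathsf{sat}$ verdict, any accepted $\vec p$ satisfies $|f_{ij}(\vec p)|\le|\tilde f_{ij}(\vec p)|+\delta/4\le\delta$ for a suitable disjunct in each conjunct, so $\vec p$ witnesses $\varphi^{\delta}$. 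For soundness of $\mathsf{unsat}$, suppose for contradiction that $\varphi$ is true at some $\vec x^*\in\vec I$; pick a grid point $\vec p$ with $\|\vec p-\vec x^*\|<2^{-m_0}$ and, for each $i$, an index $j_i$ with $f_{ij_i}(\vec x^*)=0$. The modulus gives $|f_{ij_i}(\vec p)|<\delta/4$, hence $|\tilde f_{ij_i}(\vec p)|<\delta/2\le 3\delta/4$, so the test would accept $\vec p$, contradicting the verdict. Termination is immediate since $G$ is finite and every numerical step is computable.

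The main obstacle I anticipate is a boundary issue: when some component $I_i$ is open, a witness $\vec x^*\in\vec I$ may lie arbitrarily close to an open endpoint, and the nearest grid point on a naive grid could fall outside $\vec I$. I would resolve this by choosing the grid with a small rational offset so that no coordinate of any grid point coincides with an open endpoint of $\vec I$, and by refining $m_0$ if necessary so that from every $\vec x^*\in\vec I$ at least one grid point lying in $\vec I$ is within distance $2^{-m_0}$. Since $\mathcal{F}$ is finite and the interval endpoints are rational, this can be arranged uniformly and effectively; a similar remark makes the uniform modulus $m_0$ genuinely computable from the fixed oracle-machine presentations of the $f_{ij}$.
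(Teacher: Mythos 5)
Your proposal is correct and follows essentially the same route as the paper's proof: reduce to standard form, use the computable uniform modulus of continuity of the finitely many functions over the compact closure of the domain to fix a finite rational grid, evaluate each $f_{ij}$ at each grid point to a fixed rational accuracy, and apply a threshold test whose gap (your $\delta/4$ versus $3\delta/4$, the paper's $\delta/8$ versus $\delta/2$) separates the $\delta$-$\mathsf{sat}$ and $\mathsf{unsat}$ verdicts. The only cosmetic difference is the treatment of open interval endpoints: you keep the grid inside $\dom(\varphi)$ by offsetting it, while the paper samples the closure and, when an accepting point lies on the boundary, perturbs it into the open domain by continuity; both resolutions are sound.
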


\begin{proof}
We describe a decision procedure which, given any bounded $\Sigma_1$-sentence $\varphi$ in $\mathcal{L}_{\mathcal{F}}$ and $\delta\in \mathbb{Q}^+$, decides whether $\varphi$ is false or $\varphi^{\delta}$ is true. Assume that $\varphi$ is in the form of Definition~\ref{weak-def}. 

First, we need a uniform bound on all the variables so that a modulus of continuity for each function can be computed. Suppose each $x_i$ is bounded by $I_i$, whose closure is $\overline{I_i} = [l_i, u_i]$. We write %e transform the closure of the formula into 
\begin{eqnarray*}
\overline{\varphi} :=  \exists^{[0,1]} x_1 \cdots \exists^{[0,1]} x_n\; \bigwedge_{i=1}^m(\bigvee_{j=1}^{k_i} f_{ij}\big(l_1+(u_1-l_1)x_1,...,l_n+(u_n-l_n)x_n\big)=0).
\end{eqnarray*}
%The interior $\varphi^o$ of $\varphi$ is defined by replacing each $[0,1]$ by $(0,1)$. 
From now on, $g_{ij} = f_{ij}(l_1+(u_1-l_1)x_1,...,l_n+(u_n-l_n)x_n)$. After the transformation, we have $\overline{\dom(\varphi)} = [0,1]\times\cdots\times [0,1]$, on which each $g_{ij}$ is computable (it is a composition of the finitely many computable functions in $\mathcal{F}$) and has a computable modulus of continuity $m_{g_{ij}}$. We write $\psi(\vec x)$ to denote the matrix of $\varphi$ after the transformation. 

Choose $r\in \mathbb{N}$ such that $2^{-r}<\delta/4$. Then for each $g_{ij}$, we use $m_{g_{ij}}$ to obtain $e_{ij} = m_{g_{ij}}(r)$. Choose $e\in \mathbb{N}$ such that 
%\vspace{-.3cm}
\begin{eqnarray}\label{e-def}
e\geq\max(e_{11},...,e_{mk_m})
\end{eqnarray} and write $\varepsilon = 2^{-e}$. We then have
\begin{eqnarray}\label{first}
\forall \vec x, \vec y\in \overline{\dom(\varphi)}\  (||\vec x-\vec y||<\varepsilon \rightarrow |g_{ij}(\vec x)-g_{ij}(\vec y)|<{\delta}/{4}).
%\vspace{-.2cm}
\end{eqnarray}

We now consider a finite $\varepsilon$-net of $\overline{\dom(\varphi)}$, i.e., a finite $S_{\varepsilon} \subseteq \overline{\dom(\varphi)}$, satisfying
\begin{eqnarray}\label{two}
\forall \vec x\in \overline{\dom(\varphi)}\;\exists \vec a\in S_{\varepsilon}\ ||\vec x-\vec a||<\varepsilon. 
\end{eqnarray}
In fact, $S_{\varepsilon}$ can be explicitly defined as
\begin{eqnarray}\label{three}
S_{\varepsilon} = \{(a_1,...,a_n): a_i = k\cdot\varepsilon, \mbox{ where } k\in \mathbb{N}, 0\leq k\leq 2^e \}.
\end{eqnarray}
Next, we evaluate the matrix $\psi(\vec x)$ on each point in $S_{\varepsilon}$, as follows. Let $\vec a\in S_{\varepsilon}$ be arbitrary. For each $g_{ij}$ in $\psi$, we compute $g_{ij}(\vec a)$ up to an error bound of $\delta/8$, and write the result of the evaluation as $\overline{g_{ij}(\vec a)}^{\delta/8}$. Then $|g_{ij}(\vec a)- \overline{g_{ij}(\vec a)}^{\delta/8}|<{\delta}/{8}.$ Note $\overline{g_{ij}(\vec a)}^{\delta/8}$ is a rational number. We then define
\begin{eqnarray}
\widehat{\psi}(\vec x):= \bigwedge_{i=1}^m \bigvee_{j=1}^{k_i} |\overline{g_{ij}(\vec x)}^{\delta/8}|<\delta/2.
\end{eqnarray}
Then for each $\vec a$, evaluating $\widehat{\psi}(\vec a)$ only involves comparison of rational numbers and Boolean evaluation, and $\widehat{\psi}(\vec a)$ is either true or false. Now, by collecting the value of $\widehat{\psi}$ on every point in $S_{\varepsilon}$, we have the following two cases. 

$\bullet$ Case 1: For some $\vec a\in S_{\varepsilon}$, $\widehat{\psi}(\vec a)$ is true. We show that $\varphi^{\delta}$ is true. Note that
\begin{eqnarray*}
\widehat{\psi}(\vec a) \Rightarrow \bigwedge_{i=1}^m \bigvee_{j=1}^{k_i} |\overline{g_{ij}(\vec a)}^{\delta/8}|<\delta/2 
%&\Rightarrow& \bigwedge_{i=1}^m \bigvee_{j=1}^{k_i} |g_{ij}(\vec a)|<\delta/2+\delta/8\\
&\Rightarrow&\bigwedge_{i=1}^m \bigvee_{j=1}^{k_i} |g_{ij}(\vec a)|<\delta\cdot 5/8.
\end{eqnarray*}
We need to be careful about $\vec a$, since it is an element in $\overline{\dom(\varphi)}$, not $\dom(\varphi)$. If $\vec a\in \dom(\varphi)$, then $\varphi^{\delta}$ is true, witnessed by $\vec a$. Otherwise, $\vec a\in \partial(\dom(\varphi))$. Then by continuity of $g_{ij}$, there exists $\vec a'\in \dom(\varphi)$ such that $\bigwedge_{i=1}^m \bigvee_{j=1}^{k_i} |g_{ij}(\vec a')|<\delta$. (Just let a small enough ball around $\vec a$ intersect $\dom(\varphi)$ at $\vec a'$.) That means $\varphi^{\delta}$ is also true in this case, witnessed by $\vec a'$.

$\bullet$ Case 2: For every $\vec a\in S_{\varepsilon}$, $\widehat{\psi}(\vec a)$ is false. We show that $\varphi$ is false. Note that
\begin{eqnarray*}
\neg\widehat{\psi}(\vec a) \Rightarrow \bigvee_{i=1}^m \bigwedge_{j=1}^{k_i} |\overline{g_{ij}(\vec a)}^{\delta/8}|\geq\delta/2 
%&\Rightarrow& \bigvee_{i=1}^m \bigwedge_{j=1}^{k_i} |g_{ij}(\vec a)|\geq\delta/2-\delta/8\\
&\Rightarrow&\bigvee_{i=1}^m \bigwedge_{j=1}^{k_i} |g_{ij}(\vec a)|\geq \delta\cdot 3/8. 
\end{eqnarray*}
Now recall condition (\ref{first}) and (\ref{two}). For an arbitrary $\vec x\in \dom(\varphi)$, there exists $\vec a\in S_{\varepsilon}$ such that $|g_{ij}(\vec x)-g_{ij}(\vec a)|<\delta/4$ for every $g_{ij}$. Consequently, we have $|g_{ij}(\vec x)|\geq \delta\cdot 3/8- \delta/4 = \delta/8.$ Thus, $\forall \vec x\in \dom(\varphi), \bigvee_{i=1}^m\bigwedge_{j=1}^{k_i}\; |g_{ij}(\vec x)|>0.$ This means $\neg \varphi$ is true, and $\varphi$ is false. 

In all, the procedure that decides either that $\varphi^{\delta}$ is true, or that $\varphi$ is false. 
\qed\end{proof}

We now analyze the complexity of the $\delta$-SMT problem. The decision procedure given above essentially evaluates the formula on each sample point. Thus, given an oracle for evaluating the functions, we can construct a nondeterministic Turing machine that randomly picks the sample points and decides the formula.

Most of the functions we are interested in (exp, sin, ODEs) are in Type 2 complexity class $\mathsf{P}$ or $\mathsf{PSPACE}$. To prove interesting complexity results, a technical restriction is that we need to bound the number of function compositions in a formula, because otherwise evaluating nested polynomial-time functions can be exponential in the number of nesting. Formally we define: 
\begin{definition}[Uniformly Bounded $\Sigma_1$-class]
Let $\mathcal{F}$ be a finite set of Type 2 computable functions, and $S$ a class of bounded $\Sigma_1$-sentences in $\mathcal{L}_{\mathcal{F}}$. Let $l,u\in \mathbb{Q}$ satisfy $l\leq u$. We say $S$ is uniformly ($l,u,\mathcal{F}$)-bounded, if $\forall\varphi\in S$ of the form $\exists^{I_1}x_1\cdots\exists^{I_n}x_n \bigwedge_{i=1}^{m}\bigvee_{j=1}^{k_i}\;f_{ij}(\vec x)=0$,
\begin{itemize}
\item $\forall 1\leq i\leq n$, $I_i\subseteq [l,u]$. 
\item Each $f_{ij}(\vec x)$ is contained in $\mathcal{F}$. 
\end{itemize}
\end{definition}
\begin{proposition}[\cite{Kobook}]\label{polym}
Let $\mathsf{C}$ be a Type 2 complexity class contained in $\mathsf{PSPACE}$. Then given any compact domain $D$, a $\mathsf{C}$-computable function has a uniform modulus of continuity over $D$ given by a polynomial function.
\end{proposition}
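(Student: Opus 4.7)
The plan is to unfold the definition of $\mathsf{C}$-computability and show that the polynomial resource bound on the oracle machine forces a polynomial modulus. Fix a $\mathsf{C}$-computable oracle Turing machine $M_f$ for $f:D\to\mathbb{R}$, with workspace bound $s(i)=\poly(i)$. Adopting Ko's convention that precision requests to the oracle are written in unary, the query tape (part of the bounded workspace) can hold an index of magnitude at most $s(i)$, so every query made by $M_f^{\gamma_{\vec x}}(i)$ asks for a value $\gamma_{\vec x}(k)$ with $k\le s(i)$. I would then define the candidate modulus $m(i):=s(i+1)+1$, which is polynomial in $i$, and verify that it is a uniform modulus of continuity on $D$.

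The verification is a name-matching argument. Given $\vec x,\vec y\in D$ with $||\vec x-\vec y||<2^{-m(i)}$, I would construct two names that agree on every index the machine inspects at precision $i+1$. For each $k\le s(i+1)$, density of the rationals yields a rational vector $q_k$ with $||q_k-\vec x||<2^{-k-1}$, and the triangle inequality gives $||q_k-\vec y||<2^{-k-1}+2^{-s(i+1)-1}\le 2^{-k}$, so $q_k$ is simultaneously a valid $2^{-k}$-approximation of both $\vec x$ and $\vec y$. Setting $\gamma_{\vec x}(k)=\gamma_{\vec y}(k)=q_k$ for $k\le s(i+1)$ and extending the assignment arbitrarily thereafter produces two legitimate names delivering identical oracle responses to every query of $M_f(\cdot)(i+1)$. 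Hence $M_f^{\gamma_{\vec x}}(i+1)=M_f^{\gamma_{\vec y}}(i+1)=:v$, and by definition $|v-f(\vec x)|<2^{-(i+1)}$ and $|v-f(\vec y)|<2^{-(i+1)}$, so $|f(\vec x)-f(\vec y)|<2^{-i}$, as required.

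The substantive subtlety is not in the triangle-inequality manipulations but in the preliminary observation that oracle queries are indexed by polynomially bounded integers: without the unary-query convention, a polynomial-space machine could in principle address oracle positions up to $2^{\poly(i)}$, which would only yield an exponential modulus. I would therefore open the proof by making the encoding convention explicit, after which the argument applies uniformly to any class $\mathsf{C}\subseteq\mathsf{PSPACE}$ with no further case analysis; the only place where $\mathsf{C}$ is used is the single quantitative fact that the space usage, and thus the maximum oracle-index, is polynomial in $i$.
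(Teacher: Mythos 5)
Your argument is correct and is precisely the standard proof of this fact from Ko's book, which the paper cites without reproducing any proof: bound the precision of oracle queries by the (polynomial) resource bound, then feed the machine a common name prefix for two nearby points so that it returns the same approximation for both. You also correctly identify the one genuinely delicate point, namely that the query-indexing convention is what makes the maximum queried precision polynomial rather than exponential in $i$.
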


We are now ready to prove the main complexity claim.

\begin{theorem}[Complexity]
Let $\mathcal{F}$ be a finite set of functions in Type 2 complexity class $\mathsf{C}$, $\mathsf{P}\subseteq\mathsf{C}\subseteq\mathsf{PSPACE}$. The $\delta$-SMT problem for uniformly bounded $\Sigma_1$-classes in $\mathcal{L}_{\mathcal{F}}$ is in $\mathsf{NP^C}$. 
\end{theorem}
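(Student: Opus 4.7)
The plan is to make the $\varepsilon$-net search from the decidability proof nondeterministic: guess a single sample point $\vec a\in S_{\varepsilon}$, use the $\mathsf{C}$-oracle to evaluate each $g_{ij}(\vec a)$ to precision $\delta/8$, and accept iff $\widehat{\psi}(\vec a)$ holds. By the two-case analysis already established, an accepting branch witnesses that $\varphi^{\delta}$ is true (Case 1) and no accepting branch implies $\varphi$ is false (Case 2), so correctness is inherited for free; only the resource bounds need work.

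To place the procedure in $\mathsf{NP}^{\mathsf{C}}$, I would show that the parameters in the decidability proof are polynomial in the input size $|\varphi|+|\delta|$ (with $\delta$ given in binary). First, pick $r=\lceil\log_2(4/\delta)\rceil$, which is linear in $|\delta|$. Next, since each $g_{ij}$ is obtained from a fixed $f_{ij}\in\mathcal{F}$ by composing with an affine rescaling whose coefficients are rationals of polynomial bit size, $g_{ij}$ is still $\mathsf{C}$-computable and by Proposition~\ref{polym} admits a polynomial uniform modulus of continuity $m_{g_{ij}}$ over $[0,1]^n$; thus $e:=\max_{i,j} m_{g_{ij}}(r)$ is polynomial in the input. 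The uniformly-$(l,u,\mathcal{F})$-bounded assumption is crucial here, because it rules out arbitrary compositions of $\mathcal{F}$-functions, which would otherwise drive $m_{g_{ij}}$ up exponentially with the nesting depth.

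With $e$ polynomial, each sample point $\vec a=(k_1 2^{-e},\dots,k_n 2^{-e})\in S_{\varepsilon}$ has a polynomial-size dyadic description, so the nondeterministic machine can guess $(k_1,\dots,k_n)$ in polynomial time. It then queries the $\mathsf{C}$-oracle for each of the polynomially many $g_{ij}$'s on the name of $\vec a$ at precision parameter $\lceil\log_2(8/\delta)\rceil$, returning a rational $\overline{g_{ij}(\vec a)}^{\delta/8}$; this is a polynomial number of oracle calls, each of which writes polynomially many bits of its input oracle and receives polynomially many bits of output. Evaluating $\widehat{\psi}(\vec a)$ is then just comparisons of polynomial-size rationals and a Boolean combination, all in deterministic polynomial time. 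Accept iff $\widehat{\psi}(\vec a)$ holds.

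The main obstacle I expect is the careful Type 2 bookkeeping for the oracle calls: the $\mathsf{C}$-oracle is defined to read a name function $\gamma_{\vec a}:\mathbb{N}\to\mathbb{Q}$, whereas the guessed $\vec a$ is an exact dyadic rational. I would handle this by supplying, as the oracle name, the function $i\mapsto\vec a$ (truncated appropriately), which is trivially polynomial-time uniformly in $i$ and $\vec a$, and then appealing to the standard fact underlying Proposition~\ref{polym} that $\mathsf{C}$-computation on such explicitly given dyadic inputs at precision $2^{-k}$ fits inside $\mathsf{C}$ as a function of $k$ and the input length. Once this is spelled out, the overall machine is manifestly in $\mathsf{NP}^{\mathsf{C}}$, giving the claimed bound.
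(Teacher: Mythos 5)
Your proposal is correct and follows essentially the same route as the paper's proof: nondeterministically guess one point of the $\varepsilon$-net, use the polynomial modulus of continuity from Proposition~\ref{polym} (together with the uniform bound on compositions) to argue $e$ and hence the guess are of polynomial size, and evaluate $\widehat{\psi}(\vec a)$ with polynomially many $\mathsf{C}$-oracle queries, inheriting correctness from the two cases of the decidability proof. The only cosmetic difference is that the paper defines its oracle $\theta(s,t,d)$ to take the dyadic point directly as a rational string rather than as a name function, which dissolves the Type 2 bookkeeping issue you flag; your constant-name workaround is an equally valid resolution.
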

\begin{proof}
We describe a nondeterministic Turing machine with a function oracle of complexity $\mathsf{C}$, that can decide in polynomial-time the $\delta$-SMT problem for a uniformly bounded class.

The function oracle $\theta$ we use behaves as follows. Given strings $s$, $t$, and $d$ on the query tape, $\theta(s,t,d)$ looks up the function $f_s\in \mathcal{F}$ encoded by $s$ and returns the value of $f_s({\vec x}_t)$ up to an error bound of $2^{-d}$, where ${\vec x}_t$ is a rational vector encoded by $t$ taken as the argument of $f_s$. Since all the functions in $\mathcal{F}$ are in complexity class $\mathsf{C}$, $\theta(s,t,d)$ is a $\mathsf{C}$-oracle. 

For any symbol $s$, we write $\len(s)$ to denote its bit-length. For an integer $i$, we know $\len(i)= O(\log(i))$. For a rational number $d$, which is the ratio of coprime integers $p$ and $q$, $\len(d)= O(\len(p)+\len(q))= O(\log(pq))$. For a function $f$, $len(f)$ is the length of its name. We write $O(\poly(n))$ to denote $\bigcup_kO(n^k)$. 

Let $\varphi$ be the input formula as in Definition~\ref{weak-def}, where each $f_{ij}\in \mathcal{F}$. Suppose $\varphi$ is in a uniformly $(l,u,\mathcal{F})$-bounded class. 

First, we observe that $e$, defined in (\ref{e-def}), can be obtained in time $O(\poly(\len(\varphi)+\len(\delta)))$, and $e = O(\poly(\len(\varphi)+\len(\delta)))$ (thus $\len(e) = O(\len(\varphi)+\len(\delta))$). This can be seen as follows. First, $2^{-r}<\delta$, we know $r= O(\log(\delta)) = O(\len(\delta))$. Then for each $f_{ij}$, we use its uniform modulus of continuity over $[l,u]$, given by a polynomial $m_{f_{ij}}$ (Proposition~\ref{polym}), and obtain $e_{ij}^f = m_{f_{ij}}(r)$, in time $O(\poly(\len(r)))$ and $e_{ij}^f = O(\poly(r))$. Then we compute $e_{ij}$ for the function $g_{ij}$ by scaling $e_{ij}^f$, using $e_{ij}  = \lceil -\log(2^{-e^f_{ij}}/\max_{1\leq i\leq n}\{u_i-l_i\})\rceil$. Thus $e_{ij} = O(e_{ij}^f+\log(\max_i(u_i-l_i))) = O(\poly(\len(\delta)+\len(\varphi)))$. Finally, let $e$ be the biggest $e_{ij}$. It is then clear that $e = O(\poly(\len(\varphi)+\len(\delta)))$, obtainable in polynomial time. 

Next, our procedure evaluates the matrix of the formula on each point $\vec a\in S_{\varepsilon}$. Note from ($\ref{three}$) that $S_{\varepsilon}$ is of size exponential in $e$. Here we exploit the nondeterminism of the machine by randomly picking $0\leq k\leq 2^e$ on each dimension. Note that since $\log(k)\leq e$, we have $\len(k) = O(e) = O(poly(\len(\varphi)+\len(\delta)))$. Let $\vec a = (a_1,...,a_n)$ be the randomly picked point in $S_{\varepsilon}$. Following the above estimate of $\len(k)$ and $\len(\varepsilon) = O(\log(2^{-e}))= O(e)$, we have $\len(\vec a) = O(\poly(\len(\varphi)+\len(\delta)))$. 

Now we evaluate $\widehat{\varphi}(\vec a)$. With access to the $\mathsf{C}$-oracle specified above, this can be done in polynomial-time, as follows. For each $g_{ij}(\vec a)$, we query the oracle with $\theta(f_{ij}, \vec a_{lu}, \delta/8)$, where $\vec a_{lu}$ is $\vec a$ scaled by $[l_i, u_i]$ on each dimension. This query uses $O(\poly(\len(\varphi)+\len(\delta)))$-space on the query tape. The oracle then return the value of $\overline{f_{ij}(\vec a_{lu})}^{\delta/8}= \overline{g_{ij}(\vec a)}^{\delta/8}$, and since $\mathsf{C}\subseteq \mathsf{PSPACE}$, $\len(\overline{g_{ij}(\vec a)}^{\delta/8})$ is polynomial in the input. Next we evaluate each atom by comparing these values obtained from the oracle with $\delta/2$. This uses time $O(\poly(\len(\varphi)+\len(\delta)))$. Finally, if $\widehat{\psi}(\vec x)$ is true, we return $\delta$-$\mathsf{sat}$. Thus the problem is decided in nondeterministic polynomial-time using access to the $\mathsf{C}$-oracle. We can conclude that the $\delta$-SMT problem for a uniformly bounded class is in $\mathsf{NP^C}$.\qed
\end{proof}
\begin{remark}
The restriction of a uniformly bounded class of formulas is a technical one. For a class of formulas of interest, we can always choose a rich enough $\mathcal{F}$ that contains the compositions we need, and a loose enough uniform bound on the variables. 
\end{remark}

We can now obtain a precise characterization of the complexity for $\delta$-SMT problems in signatures of interest. Recall that most common functions, such as polynomials, $\exp$, $\sin$, are all $\mathsf{P}$-computable and Lipschitz-continuous ODEs are $\mathsf{PSPACE}$-complete. 
\begin{corollary}
Let $\mathcal{F}$ be a finite set of $\mathsf{P}$-time computable real functions, such as $\{+, \times, \exp, \sin\}$. The uniformly-bounded $\delta$-SMT problem for $\mathcal{L}_{\mathcal{F}}$ is $\mathsf{NP}$-complete.
\end{corollary}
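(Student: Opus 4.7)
The plan has two parts. Containment in $\mathsf{NP}$ is essentially immediate from the preceding Complexity Theorem: taking $\mathsf{C} = \mathsf{P}$ yields $\mathsf{NP}^{\mathsf{P}} = \mathsf{NP}$, and since $+, \times, \exp, \sin$ are all $\mathsf{P}$-computable, the uniformly-bounded $\delta$-SMT problem over $\mathcal{L}_{\mathcal{F}}$ sits in $\mathsf{NP}$.

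For $\mathsf{NP}$-hardness I would give a polynomial-time many-one reduction from $3$-SAT that uses only $+$ and rational constants, so that it applies to any $\mathcal{F}$ containing at least the linear polynomial operations. Given a $3$-CNF formula $\phi = C_1 \wedge \cdots \wedge C_m$ over Boolean variables $b_1,\ldots,b_n$, introduce real variables $x_i$ each quantified over the fixed interval $[0,1]$ (so that the class is uniformly $(0,1,\mathcal{F})$-bounded), encode the literal $b_i$ as the atom $x_i - 1 = 0$ and $\neg b_i$ as $x_i = 0$, and output the standard-form $\Sigma_1$-sentence
\[
\varphi \ := \ \exists^{[0,1]} x_1 \cdots \exists^{[0,1]} x_n\ \bigwedge_{k=1}^m \bigvee_{j=1}^3 \bigl(x_{\sigma(k,j)} - v_{k,j} = 0\bigr),
\]
with $v_{k,j}\in\{0,1\}$ recording the polarity of the $j$-th literal of $C_k$, paired with the fixed rational $\delta = 1/4$. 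The reduction is clearly computable in polynomial time.

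Soundness is trivial: any satisfying assignment of $\phi$ furnishes a $\{0,1\}$-valued witness for $\varphi$ itself, so $\varphi^{\delta}$ is true. For completeness, from any witness $\vec x \in [0,1]^n$ of $\varphi^{\delta}$ define $\beta(b_i) = 1$ if $x_i > 1/2$ and $\beta(b_i) = 0$ otherwise. For each clause $C_k$, the $\delta$-weakening supplies some $j$ with $|x_{\sigma(k,j)} - v_{k,j}| \leq 1/4$; since $v_{k,j} \in \{0,1\}$ and the targets $0$ and $1$ are separated by more than $2\delta$, the rounded value $\beta(b_{\sigma(k,j)})$ must coincide with $v_{k,j}$, so $\beta$ satisfies the literal $l_{k,j}$ and hence $C_k$. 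Thus $\phi \in \mathrm{SAT}$ iff the instance $(\varphi, 1/4)$ is $\delta$-sat, giving $\mathsf{NP}$-hardness.

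The only genuinely delicate point is choosing $\delta$ small enough relative to the gap between the encoded Boolean target values so that rounding is unambiguous and remains consistent across multiple clause occurrences of the same variable; once $\delta < 1/2$ is fixed, both properties follow automatically from a single per-coordinate argument, because each $x_i$ supplies a single rounded value $\beta(b_i)$ shared by every clause in which it appears. Combining containment and hardness yields $\mathsf{NP}$-completeness.
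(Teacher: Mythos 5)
Your proposal is correct and follows essentially the same route as the paper: membership via the preceding complexity theorem with $\mathsf{C}=\mathsf{P}$ (so $\mathsf{NP}^{\mathsf{P}}=\mathsf{NP}$), and hardness by encoding propositional satisfiability with real variables pinned near $0$ or $1$, using $\delta<1/2$ to keep the two targets separated and noting that the $\mathsf{unsat}$/$\delta$-$\mathsf{sat}$ cases do not overlap. The only difference is cosmetic: you reduce from 3-SAT with equality atoms $x_i - v_{k,j}=0$ already in standard form, while the paper reduces from general SAT using the atoms $x_i<0$ and $x_i>1$ together with an added clause $(x_i=0\vee x_i=1)$.
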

\begin{proof}
Since the functions in $\mathcal{F}$ are $\mathsf{P}$-time computable, the $\delta$-SMT problem is in $\mathsf{NP^P}= \mathsf{NP}$. We only need to encode Boolean satisfiability for hardness. We need to be careful that no negations can be used. For any propositional formula $\phi(p_1,...,p_n)$, substitute $p_i$ by $x_i<0$ and $\neg p_i$ by $x_i>1$, and add $(x_i=0\vee x_i=1)$ as a clause to the formula. Add the quantifiers $\exists^{[-1,2]}x_i$ for each $x_i$. Then for any $\delta<0.5$, $\phi$ is satisfiable iff the translation is $\delta$-true, and unsatisfiable iff the translation is false. Note that the cases do not overlap.\qed 
\end{proof}
\begin{corollary}
Let $\mathcal{F}$ be a finite set of Lipschitz-continuous ODEs over compact domains. Then the uniformly-bounded $\delta$-SMT problem in $\mathcal{L}_{\mathcal{F}}$ is in $\mathsf{PSPACE}$, and there exists $\mathcal{L}_{\mathcal{F}}$ such that it is $\mathsf{PSPACE}$-complete.
\end{corollary}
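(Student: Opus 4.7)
The plan is to establish the two halves of the corollary separately. For the upper bound, every function in $\mathcal{F}$ is the solution of a Lipschitz-continuous ODE over a compact domain and is therefore in Type 2 complexity class $\mathsf{PSPACE}$ by the Kawamura result cited above. Plugging $\mathsf{C}=\mathsf{PSPACE}$ into the preceding Complexity Theorem places the uniformly-bounded $\delta$-SMT problem in $\mathsf{NP}^{\mathsf{PSPACE}}$. Since a nondeterministic polynomial-time machine with a $\mathsf{PSPACE}$ oracle can be simulated in polynomial space (enumerate the polynomially many nondeterministic branches sequentially, reusing the same polynomial-space scratch area for each oracle call), we have $\mathsf{NP}^{\mathsf{PSPACE}}=\mathsf{PSPACE}$, giving the desired containment.

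For the hardness direction, I would invoke Kawamura's theorem more strongly: it produces a specific Lipschitz-continuous ODE whose solution function $h:[0,1]\to\mathbb{R}$ is $\mathsf{PSPACE}$-hard even to evaluate at rational inputs up to a fixed constant precision. Fix a $\mathsf{PSPACE}$-complete language $L$; hardness of $h$ yields a polynomial-time computable map $w\mapsto(x_w,c_w)\in\mathbb{Q}\times\mathbb{Q}$ with a promise gap, namely $w\in L \Rightarrow h(x_w)>c_w+1$ and $w\notin L \Rightarrow h(x_w)<c_w-1$ (any constant margin suffices after rescaling $h$). Take $\mathcal{L}_{\mathcal{F}}$ to contain this $h$ together with basic arithmetic, and encode each $w$ as the bounded $\Sigma_1$-sentence
\[
\varphi_w \;:=\; \exists^{[x_w,x_w]} x\; \exists^{[1,M]} v.\; h(x)-c_w-v = 0,
\]
where $M\in\mathbb{Q}$ is a loose upper bound on $h$ over $[0,1]$, available by compactness.

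Unpacking the semantics: $\varphi_w$ is classically true iff $h(x_w)\geq c_w+1$, and for any fixed $\delta<1/2$ the $\delta$-weakening $|h(x_w)-c_w-v|\leq\delta$ is satisfiable by some $v\in[1,M]$ iff $h(x_w)>c_w+1-\delta$. Thus $w\in L$ forces the procedure to answer $\delta$-$\mathsf{sat}$, while $w\notin L$ (giving $h(x_w)<c_w-1$, so no admissible $v\geq 1$ can bring the expression within $\delta$ of zero) forces $\mathsf{unsat}$; a $\delta$-SMT oracle therefore decides $L$. The main obstacle is precisely this $\delta$-gap bookkeeping: the specification permits either answer in the ambiguous region where $|h(x_w)-c_w-v|$ lies near $\delta$, so the reduction must arrange the PSPACE promise gap to strictly dominate the fixed $\delta$. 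Working with the constant-precision form of Kawamura's hardness result and rescaling $h$ to make the gap exceed $2\delta$ addresses this; a secondary concern is that $h(x)-c_w-v$ be a legitimate $\mathcal{L}_{\mathcal{F}}$-composition, handled by including $+$ and scalar multiplication in $\mathcal{F}$, which leaves the PSPACE complexity intact.
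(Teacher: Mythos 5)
Your upper-bound half matches the paper's one-line argument: instantiate the Complexity Theorem with $\mathsf{C}=\mathsf{PSPACE}$ and use $\mathsf{NP}^{\mathsf{PSPACE}}=\mathsf{PSPACE}$. (A minor slip there: a nondeterministic polynomial-time machine has exponentially many branches, not polynomially many; the point is that each branch is polynomially \emph{long}, so cycling through all of them reuses the same polynomial-space workspace.) The paper's hardness half is a single sentence citing Kawamura, so you are right to try to supply the actual reduction, but your reduction rests on a premise that is false: a fixed solution $h$ of a Lipschitz ODE on a compact domain is \emph{not} $\mathsf{PSPACE}$-hard to evaluate ``up to a fixed constant precision.'' Since $h$ is $\mathsf{PSPACE}$-computable it has a polynomial uniform modulus of continuity (Proposition~\ref{polym}), so evaluating $h$ to within, say, $1/4$ only requires knowing the argument to some fixed constant precision and is answerable by a finite lookup table; no such evaluation problem can encode a $\mathsf{PSPACE}$-complete language unless $\mathsf{P}=\mathsf{PSPACE}$. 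Kawamura's hardness lives in the precision parameter: distinguishing $w\in L$ from $w\notin L$ requires evaluating $h$ to accuracy $2^{-p(|w|)}$, so the promise gap available to you is $2^{-p(|w|)}$, not a constant. Your proposed repair --- rescaling $h$ until the gap exceeds $2\delta$ --- does not yield a single fixed function (the scale factor must grow with $|w|$), and inserting the scaling as a composition $2^{p(|w|)}\cdot h(x)$ is disallowed in a uniformly bounded class, where every atom's function must literally belong to the finite set $\mathcal{F}$ and the variable ranges are fixed.

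The correct fix is to recall that $\delta$ is part of the input to the $\delta$-SMT problem and enters the complexity bounds only through its bit-length. The reduction should map $w$ to the pair $(\varphi_w,\delta_w)$ with $\delta_w=2^{-p(|w|)-2}$, which has polynomially many bits; with the Kawamura gap now dominating $2\delta_w$, your construction goes through verbatim after replacing the constant margins $c_w\pm 1$ by $c_w\pm 2^{-p(|w|)}$ and the slack range $[1,M]$ by $[2^{-p(|w|)},M]$: truth of $\varphi_w$ forces the answer $\delta$-$\mathsf{sat}$, and falsity of $\varphi_w^{\delta_w}$ forces $\mathsf{unsat}$. The rest of your write-up --- the semantics of the slack variable $v$, the explicit handling of the ambiguous region where either answer is permitted, and the packaging of $h(x)-c-v$ as a single member of $\mathcal{F}$ --- is sound and supplies detail the paper itself omits.
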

\begin{proof}
We have $\mathsf{NP^{PSPACE}} = \mathsf{PSPACE}$. Since some ODEs are {\sf PSPACE}-complete to solve~\cite{Kawamura09}, there exists $\mathcal{L}_{\mathcal{F}}$ for which $\delta$-SMT problem is $\mathsf{PSPACE}$-complete. \qed
\end{proof}
%\vspace{-.4cm}
\section{$\delta$-Completeness of the DPLL$\langle$ICP$\rangle$ Framework}\label{dpllicp}
%\vspace{-.1cm}
We now give a formal analysis of the integration of ICP and DPLL(T) for solving bounded $\delta$-SMT. Our goal is to establish sufficient and necessary conditions under which such an integration is $\delta$-complete. 
%\vspace{-.5cm}
\subsection{Interval Constraint Propagation}

The method of Interval Constraint Propagation (ICP)~\cite{handbookICP} finds solutions of real constraints using a ``branch-and-prune" method, combining interval arithmetic and constraint propagation. The idea is to use interval extensions of functions to ``prune'' out sets of points that are not in the solution set, and ``branch'' on intervals when such pruning can not be done, until a small enough box that may contain a solution is found. A high-level description of the decision version of ICP is given in Algorithm 1 and we give formal definitions as follows.

\begin{definition}[Floating-Point Intervals and Hulls]
Let $\mathbb{F}$ denote the finite set of all floating point numbers with symbols $-\infty$ and $+\infty$ under the conventional order $<$. Let
%\vspace{-.2cm}
$\mathbb{IF} = \{[a,b]\subseteq \mathbb{R}: a,b\in \mathbb{F}, a\leq b\}$ denote the set of closed real intervals with floating-point endpoints, and $\mathbb{BF} = \bigcup_{n=1}^{\infty}\mathbb{IF}^n$ the set of {\em boxes} with these intervals. Let $S\subseteq \mathbb{R}$ be any set of real numbers, the hull of $S$ is written as $\mathrm{Hull}(S) = \bigcap \{I\in \mathbb{IF}: S\subseteq I\}.$
\end{definition}

For $I= [a, b]\in \mathbb{IF}$, we write $|I| = |b-a|$ to denote its size. 

%Occasionally we will also use real intervals, writte as $\mathbb{IR} = \{[a,b]: a,b\in \mathbb{R}\}$. 

\begin{definition}[Interval Extension (cf. \cite{handbookICP})]
Let $f:\subseteq\mathbb{R}^n\rightarrow \mathbb{R}$ be a real function. An interval extension operator $\sharp(\cdot)$ maps $f$ to a function $\sharp f:\subseteq \mathbb{BF}\rightarrow \mathbb{IF}$, such that 
%\vspace{-.1cm}
$\forall B\in\mathbb{BF}\cap \dom(\sharp f), \{f(\vec x):\vec x\in B\}\subseteq \sharp f(B).$
\end{definition}
\begin{example}
The natural extension of $f = 2\cdot(x+y)\cdot z$ is given by $\sharp f = [2,2]\cdot(I_x+I_y)\cdot I_z$, where the interval operations are defined as $[a_1,b_1]+[a_2, b_2] = [a_1+a_2, b_1+b_2]$ and $[a_1,b_1]\cdot[a_2,b_2] = [\min(a_1a_2,a_1b_2,b_1a_2,b_1b_2), \max(a_1a_2,a_1b_2,b_1a_2,b_1b_2)]$. 
\end{example}

%\vspace{-.5cm}
\begin{algorithm}\label{algo1}
\SetKwData{Left}{left}\SetKwData{This}{this}\SetKwData{Up}{up}
\SetKwFunction{Union}{Union}\SetKwFunction{FindCompress}{FindCompress}
\SetKwInOut{Input}{input}\SetKwInOut{Output}{output}
\Input{Constraints $f_1(x_1,...,x_n)=0,...,f_m(x_1,...,x_n)=0$, initial box $B^0 = I^0_1\times \cdots \times I^0_n$, box stack $S=\emptyset$, and precision $\varepsilon\in \mathbb{Q}^+$.}
\Output{{\sf sat} or {\sf unsat}.}
\BlankLine
$S.\mathrm{push}(B_0)$\;
\While{$S\neq \emptyset$}{\label{while}
$B\leftarrow S.\mathrm{pop}()$ \;
\While{$\exists 1\leq i \leq m, B\neq \mathrm{Prune}(B,f_i)$}{ 
%\For{$j\leftarrow 1$ \KwTo $m$}{
	$B\leftarrow\mathrm{Prune}(B, f_i)$ \;
%	\If{$B=\emptyset$}{break\;}
}
\If{$B\neq \emptyset$}
{\eIf{$\exists 1\leq i\leq n, |I_i|\geq \varepsilon$}{$\{B_1,B_2\}\leftarrow \mathrm{Branch}(B, i)$\;$S.\mathrm{push}(\{B_1,B_2\})$\;}{return {\sf sat}\;}}
}
return {\sf unsat}\;
\caption{High-Level ICP$_{\varepsilon}$ (decision version of Branch-and-Prune)}
\end{algorithm}
%\vspace{-.5cm}
In Algorithm 1, Branch$(B,i)$ is an operator that returns two smaller boxes $B' = I_1\times\cdots\times I_i'\times\cdots\times I_n$ and $B''=I_1\times \cdots\times I_i''\times \cdots\times I_n$, where $I_i\subseteq I_i'\cup I_i''$. To ensure termination it is assumed that there exists some constant $0<c<1$ such that $c\cdot |I_i|\leq |I_i'|$ and $c\cdot |I_i|\leq |I_i''|$ for all $i$.

 %The pruning operation $\mathrm{Prune}(B,f_i)$ will be explained in detail below. 

The key component of the algorithm is the $\mathrm{Prune}(B, f)$ operation. A simple example of a pruning operation is as follows.
\begin{example}
Consider $x-y^2 = 0$ with initial intervals $x\in [1,2]$ and $y\in [0,4]$. Let $\sharp f(I_x,I_y) = I_x-I_y^2$ be the natural interval extension of the left hand side. Since we know $0\not\in \sharp f([1,2],[2, 4])$, we can contract the interval on $y$ from $[0,4]$ to $[0,2]$ in one pruning step.  
\end{example}
%\paragraph{Pruning Operators.} The key component of the ICP algorithm is the $\mathrm{Prune}(B, f)$ operation in Algorithm 1. A simple example of a pruning operation is as follows.
%\begin{example}
%Consider $x-y^2 = 0$ with initial intervals $x\in [1,2]$ and $y\in [0,4]$. Let $\sharp f(I_x,I_y) = I_x-I_y^2$ be the natural interval extension of the left handside. Since we know $0\not\in \sharp f([1,2],[2, 4])$, we can contract the interval on $y$ from $[0,4]$ to $[0,2]$ in one pruning step.  
%\end{example}
%\vspace{.1cm}
In principle, any operation that contracts the intervals on variables can be seen as pruning. However, for correctness we need several formal requirements on the pruning operator in ICP$_{\varepsilon}$. 
\begin{notation}
For any $f:\mathbb{R}^n\rightarrow \mathbb{R}$, we write $Z_f=\{\vec a\in \mathbb{R}^n: f(\vec a)=0\}$.
\end{notation}
\begin{definition}[Well-defined Pruning Operators]\label{well}
Let $\mathcal{F}$ be a collection of real functions, and $\sharp$ be an interval extension operator on $\mathcal{F}$. A {\em well-defined (equality) pruning operator} with respect to $\sharp$ is a partial function $\mathrm{Prune}_{\sharp} : \subseteq \mathbb{BF}\times \mathcal{F}\rightarrow \mathbb{BF}$, such that $\forall f\in \mathcal{F}$, $B,B'\in \mathbb{BF}$, 
\begin{itemize}
\item (W1) $\mathrm{Prune}_{\sharp}(B, f)\subseteq B$;
\item (W2) If $(\mathrm{Prune}_{\sharp}(B,f))\neq \emptyset$, then $0\in \sharp f(\mathrm{Prune}_{\sharp}(B,f))$. 
\item (W3) $B \cap Z_f \subseteq \mathrm{Prune}_{\sharp}(B, f)$;
\end{itemize}
\end{definition}
When $\sharp$ is clear, we simply write $\mathrm{Prune}$. It specifies the following requirements. (W1) requires contraction, so that the algorithm always makes progress: branching always decreases the size of boxes, and pruning never increases them. (W2) requires that the result of a pruning is always a reasonable box that may contain a zero. Otherwise $B$ should have been pruned out. (W3) ensures that the real solutions are never discarded in pruning (called ``completeness'' in~\cite{handbookICP}). 
%These conditions overlap with the usual requirements for designing pruning operators in constraint satisfaction literature, but are not the same. For instance, idempotence ($\mathrm{Prune}(\mathrm{Prune}(B,f))=\mathrm{Prune}(B,f)$) is not needed here. 
%\begin{definition}
We use $\mathrm{Prune}(B, f_1,...,f_m)$ to denote the iterative application of $\mathrm{Prune}(\cdot, f_i)$ on $B$ for all $1\leq i\leq m$, until a fixed-point is reached. (Line 4-6 in Algorithm 1.)
%\end{definition}
\begin{proposition}
For all $i$, $Prune(B,f_1,...,f_m)\subseteq Prune(B,f_i)$. 
\end{proposition}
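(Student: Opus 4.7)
Let $B^{\star} = \mathrm{Prune}(B, f_1,\ldots,f_m)$ denote the box produced by the iterated fixed-point computation described in Lines 4--6 of Algorithm~1. The plan rests on two elementary facts. First, by iterating (W1) along the sequence of pruning steps, every intermediate box, and hence $B^{\star}$ itself, is contained in $B$. Second, because the while-loop has terminated at $B^{\star}$, no single application of $\mathrm{Prune}(\cdot, f_j)$ can further shrink $B^{\star}$; otherwise the loop guard would still be satisfied. In particular, for the specific index $i$ we care about, we obtain the equality $\mathrm{Prune}(B^{\star}, f_i) = B^{\star}$.

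To turn these two facts into the desired inclusion $B^{\star}\subseteq \mathrm{Prune}(B, f_i)$, I would invoke monotonicity of the pruning operator on its first argument: whenever $B'\subseteq B''$, one has $\mathrm{Prune}(B', f)\subseteq \mathrm{Prune}(B'', f)$. This is a standard property of ICP contractors in the literature~\cite{handbookICP}, usually stated alongside contractance and completeness. Applying it to $B^{\star}\subseteq B$ yields $\mathrm{Prune}(B^{\star}, f_i) \subseteq \mathrm{Prune}(B, f_i)$, and combining with the fixed-point equality gives $B^{\star} = \mathrm{Prune}(B^{\star}, f_i) \subseteq \mathrm{Prune}(B, f_i)$, which is the claim.

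The main obstacle is precisely that monotonicity is not among (W1)--(W3); without it, one can concoct pathological ``well-defined'' pruners where shrinking the input first makes the output incomparable with the pruning of the original box, breaking the chain above. I would therefore either (a) add monotonicity as an implicit fourth clause of Definition~\ref{well}, justified by the fact that the concrete pruners used in ICP (hull-consistency, box-consistency, interval-Newton, forward--backward propagation) are all monotone by construction, or (b) explicitly track, along the iteration, the last step at which $\mathrm{Prune}(\cdot, f_i)$ was applied and use monotonicity only at that single invocation. Either route makes the proof short; the subtlety lies entirely in recognizing the need for this extra property rather than in any deeper combinatorial argument.
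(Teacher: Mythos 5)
The paper states this proposition without any proof (the ``Omitted Proofs'' appendix does not contain one), so there is no official argument to compare against; your proposal is the natural reconstruction, and your diagnosis of the gap is correct and worth making explicit. The fixed-point observation ($\mathrm{Prune}(B^{\star},f_i)=B^{\star}$ for every $i$ once the inner while-loop of Algorithm~1 exits) and the iterated use of (W1) to get $B^{\star}\subseteq B$ are both sound, and you are right that the remaining step genuinely requires inclusion isotonicity, which is \emph{not} derivable from (W1)--(W3). One can see this concretely: take $B=[0,4]$, let $\mathrm{Prune}(B,f_1)=[0,2]$, $\mathrm{Prune}([0,2],f_2)=[1,2]$ with $[1,2]$ a fixed point of both pruners, but set $\mathrm{Prune}(B,f_2)=[3,4]$ with $Z_{f_2}=\{3.5\}$ and a deliberately coarse interval extension $\sharp f_2(I)\supseteq[-1,1]$ everywhere. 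All of (W1)--(W3) hold, yet $\mathrm{Prune}(B,f_1,f_2)=[1,2]\not\subseteq[3,4]=\mathrm{Prune}(B,f_2)$. So the proposition is false for arbitrary well-defined pruners and monotonicity must be added as a hypothesis (your option (a)); it does hold for the concrete contractors the paper cares about, such as the box-consistency operator, whose definition via $I_i\cap\mathrm{Hull}(\cdots)$ is monotone in $B$ by construction. Your route (b) is slightly weaker than (a), since $f_i$ may never actually be invoked during the iteration if $B$ is already a fixed point of $\mathrm{Prune}(\cdot,f_i)$; the fixed-point argument of (a) covers that case uniformly, so prefer it.
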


%\begin{definition}[Branch-and-Prune Derivation]
%Let $B^0, ..., B^n\in \mathbb{BF}$ be boxes. We write $B_0,...,B_{n-1} \vdash_{bp} B_n$ if $B_n = \mathrm{Prune}(B_{n-1})$ (i.e., the last operation is pruning), and for every $0\leq i< n-1$, either $\mathrm{Prune}(B_i, f) = B_{i+1}$ or $\mathrm{Branch}(B_i) = B_{i+1}$.
%\end{definition}
It is clear from the description of Algorithm 1 that the following properties hold.
\begin{lemma}\label{key-lemma}
Algorithm 1 always terminates. If it returns $\mathsf{sat}$ then there exists nonempty boxes $B,B'\subseteq B_0$, such that $||B||<\varepsilon$ and $B=\mathrm{Prune}(B',f_1,...,f_m)$. If it returns $\mathsf{unsat}$ then $\forall\vec a\in B_0$, there exists $B\subseteq B_0$ such that $\vec a\in B$ and $\mathrm{Prune}(B,f_1,...,f_m)= \emptyset$. 
\end{lemma}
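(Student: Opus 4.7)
The plan is to verify the three assertions—termination, the $\mathsf{sat}$ conclusion, and the $\mathsf{unsat}$ conclusion—by inspecting the control flow of Algorithm~1 and invoking the pruning axioms (W1)--(W3).

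For termination, the inner while loop (lines~4--6) produces a strictly descending chain in $\mathbb{BF}$: each iteration replaces $B$ by $\mathrm{Prune}(B,f_i)$ with $\mathrm{Prune}(B,f_i)\subsetneq B$, by the guard plus (W1). Since $\mathbb{F}$ is finite, so is $\mathbb{BF}$, and the chain must stabilise in finitely many steps. For the outer loop I use the branching constant $0<c<1$: along any root-to-leaf path of the exploration tree the widest interval shrinks by a factor of at least $c$ per Branch step, so after $O(\log_{1/c}(\max_i|I_i^0|/\varepsilon))$ branches all dimensions drop below $\varepsilon$ and that branch either returns $\mathsf{sat}$ or dies (its prune is empty). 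The tree is therefore finite and the stack must empty in finite time.

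For the $\mathsf{sat}$ conclusion, when line~11 fires the current $B$ was obtained by popping some $B'\in\mathbb{BF}$ from the stack on line~4 and iteratively pruning to a fixed point; hence $B=\mathrm{Prune}(B',f_1,\dots,f_m)$ by the very definition of the iterated operator. The outer ``if $B\neq\emptyset$'' together with the failure of ``$\exists i,|I_i|\geq\varepsilon$'' yield $B\neq\emptyset$ and $\|B\|<\varepsilon$; nonemptiness of $B'$ follows from $\emptyset\neq B\subseteq B'$ via (W1). A simple induction on the stack's history shows that every pushed box is contained in $B_0$: the initial push is $B_0$, and each Branch output is a subbox of its (pruned) parent.

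For the $\mathsf{unsat}$ conclusion I would view the entire run as a finite rooted tree: the root is $B_0$, the children of each internal node are the two Branch outputs of its pruned version, and each leaf is a popped box whose iterated prune is empty (there are no $\mathsf{sat}$-leaves under the assumption). Given $\vec a\in B_0$, I descend from the root maintaining the invariant $\vec a\in B$: at an internal node $B\ni\vec a$, let $B_p:=\mathrm{Prune}(B,f_1,\dots,f_m)$; if $\vec a\in B_p$ then $B_p\subseteq B_p'\cup B_p''$ places $\vec a$ in one of the children and the descent continues. When the descent reaches a leaf, that leaf is the required witness $B$. The main obstacle is the ``lost'' case $\vec a\in B$ but $\vec a\notin B_p$, where the trace cannot proceed into a child. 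I would handle it by constructing a small floating-point sub-box $B_{\vec a}\ni\vec a$ inside $B\setminus B_p$: since $\vec a\notin B_p$, (W3) rules out $\vec a$ being a common zero of all the $f_{ij}$, so some $f_{ij}(\vec a)\neq 0$; by standard inclusion monotonicity and convergence of the interval extension $\sharp$ on shrinking boxes, $B_{\vec a}$ can be chosen so that $0\notin\sharp f_{ij}(B_{\vec a})$, and (W1)--(W2) then force $\mathrm{Prune}(B_{\vec a},f_{ij})=\emptyset$; the preceding proposition on iterated pruning then gives $\mathrm{Prune}(B_{\vec a},f_1,\dots,f_m)=\emptyset$, yielding the witness.
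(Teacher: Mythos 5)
The paper offers no proof of this lemma (it is asserted to be ``clear from the description of Algorithm 1''), so there is no official argument to compare against; on its own merits, your termination argument and your treatment of the $\mathsf{sat}$ branch are correct, modulo two cosmetic slips (the per-Branch shrink factor is $1-c$ rather than $c$, and the depth bound should carry a factor of $n$, since each Branch step shrinks only one coordinate and a coordinate is re-branched only while it still exceeds $\varepsilon$).

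The $\mathsf{unsat}$ branch, however, has a genuine gap, and you have put your finger on exactly where it is: the ``lost'' case $\vec a\in B$ but $\vec a\notin\mathrm{Prune}(B,f_1,\dots,f_m)$. Your repair invokes inclusion monotonicity of $\sharp$ and convergence of $\sharp f(B)$ to $f(\vec a)$ as $B$ shrinks to $\vec a$. Neither property is assumed anywhere in the paper: the definition of an interval extension requires only the enclosure $\{f(\vec x):\vec x\in B\}\subseteq\sharp f(B)$ (so $\sharp f$ may be arbitrarily loose), and Definition~\ref{well} requires only (W1)--(W3). Moreover $\mathbb{F}$ is finite, so boxes cannot shrink arbitrarily, and even a convergent extension need not exclude $0$ at the finest floating-point resolution when $f_{ij}(\vec a)$ is tiny. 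In fact the clause is false as stated under the paper's hypotheses alone: take $f(x)=x-2$ on $B_0=[0,1]$, let $\sharp f(B)\supseteq[-10,10]$ (a legitimate, merely loose, extension, which makes (W2) vacuous), and define $\mathrm{Prune}(B,f)=\emptyset$ when $B\subseteq[0.5,1]$ and $|B|<0.1$; $\mathrm{Prune}(B,f)=B\cap[0.5,1]$ when $|B\cap[0.5,1]|\geq 0.1$; and $\mathrm{Prune}(B,f)=B$ otherwise. This satisfies (W1)--(W3), the algorithm returns $\mathsf{unsat}$ (every explored box ends up inside $[0.5,1]$ and dies once its width drops below $0.1$), yet every box containing $\vec a=0.25$ has nonempty iterated prune. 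The statement can only be salvaged---and this is all that Theorem~\ref{main-theorem} actually uses---by restricting the $\mathsf{unsat}$ clause to points $\vec a$ with $f_i(\vec a)=0$ for all $i$. For such points, (W3) guarantees $\vec a\in\mathrm{Prune}(B,f_i)$ at every single pruning step, hence $\vec a$ is never lost, your tree descent always proceeds into a child, and $\vec a$ lands in a leaf whose iterated prune is empty yet must contain $\vec a$, which is the contradiction the main theorem needs. I recommend you prove that weaker statement (or directly ``if $\mathsf{unsat}$ is returned then $\varphi$ is false'') and drop the attempted construction of $B_{\vec a}$.
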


Now we prove the main theorem. 
\begin{theorem}[$\delta$-Completeness of ICP$_{\varepsilon}$]\label{main-theorem}
Let $\delta\in \mathbb{Q}^+$ be arbitrary. We can find an $\varepsilon\in \mathbb{Q}^+$ such that the $\mathrm{ICP}_{\varepsilon}$ algorithm is $\delta$-complete for conjunctive $\Sigma_1$-sentences in $\mathcal{L}_{\mathcal{F}}$ (where $\mathsf{sat}$ is interpreted as $\delta$-$\mathsf{sat}$) if and only if the pruning operator in ICP$_{\varepsilon}$ is well-defined. 
\end{theorem}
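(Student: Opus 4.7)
The plan is to prove the two directions of the iff separately. For sufficiency, I will choose $\varepsilon$ from $\delta$ and $\varphi$ using the uniform modulus of continuity of each $f_i$ together with the convergence behavior of the interval extension $\sharp$, and then use Lemma~\ref{key-lemma} to argue soundness of both returned answers. For necessity, I will exhibit, for each of (W1), (W2), (W3), a concrete conjunctive input on which any pruning operator violating that clause makes ICP$_\varepsilon$ fail to be $\delta$-complete.

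Sufficiency. Termination of ICP$_\varepsilon$ follows from (W1) together with the branching contraction factor $c<1$, which forces the box sizes to shrink strictly and the search tree to be finite. For soundness I pick $\varepsilon\in\mathbb{Q}^+$ small enough that, for every $f_i$ and every box $B\subseteq\overline{\dom(\varphi)}$ with $\|B\|<\varepsilon$, the interval $\sharp f_i(B)$ has width at most $\delta$; such $\varepsilon$ can be computed from the uniform modulus of continuity of each $f_i$ on the compact domain. When ICP$_\varepsilon$ returns $\mathsf{sat}$, Lemma~\ref{key-lemma} produces a nonempty $B$ with $\|B\|<\varepsilon$ that is a fixed point of $\mathrm{Prune}(\cdot,f_1,\ldots,f_m)$. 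By (W2), $0\in\sharp f_i(B)$ for each $i$, and combined with width at most $\delta$ we get $\sharp f_i(B)\subseteq[-\delta,\delta]$; by the inclusion property of $\sharp$, any $\vec a\in B$ satisfies $|f_i(\vec a)|\le\delta$. If $\vec a\in\dom(\varphi)$ we have our witness for $\varphi^\delta$; if $\vec a$ sits on the boundary, continuity of each $f_i$ yields a nearby interior witness exactly as in Case~1 of the decidability proof. When ICP$_\varepsilon$ returns $\mathsf{unsat}$, Lemma~\ref{key-lemma} says every $\vec a\in B_0$ lies in some box $B$ with $\mathrm{Prune}(B,f_1,\ldots,f_m)=\emptyset$. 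Iterating (W1) and (W3) along the pruning sequence, the set of common zeros of $f_1,\ldots,f_m$ inside $B$ is preserved, hence empty; so $\vec a$ is not a common zero and $\varphi$ is false.

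Necessity. For (W2): pick $B_0,f$ with $\mathrm{Prune}(B_0,f)\neq\emptyset$ but $0\notin\sharp f(\mathrm{Prune}(B_0,f))$, and consider the sentence $\exists^{B_0}\vec x.\,f(\vec x)=0$ where $f$ is in fact bounded away from $0$ on $B_0$ by more than $\delta$ (e.g.\ $f\equiv 1$); then ICP$_\varepsilon$ returns $\mathsf{sat}$ although $\varphi^\delta$ is false. For (W3): choose $B,f$ and a specific $\vec a\in B\cap Z_f$ excluded by $\mathrm{Prune}(B,f)$, and arrange the input so that the algorithm reaches such a $B$ and prunes it to $\emptyset$; ICP$_\varepsilon$ returns $\mathsf{unsat}$ though $\varphi$ is actually satisfiable, hence $\delta$-satisfiable. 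For (W1): a pruning step can enlarge a box, which either prevents termination of the inner while-loop on line~4 (contradicting the standing termination requirement of a decision procedure) or lets the search escape $B_0$, in which case a formula that is unsatisfiable on $B_0$ but has real solutions outside it can be returned as $\mathsf{sat}$.

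The main obstacle is the step in sufficiency that turns ``$0\in\sharp f_i(B)$ for a small $B$'' into ``$|f_i(\vec a)|\le\delta$ for $\vec a\in B$''. It requires $\sharp$ to be \emph{convergent}, in the sense that the width of $\sharp f_i(B)$ tends to $0$ as $\|B\|\to 0$ with a computable rate derived from a modulus of continuity of $f_i$. This is standard for the interval extensions used in practice (such as the natural extension) but is not explicit in the minimal definition of interval extension given earlier; I will state it as an auxiliary regularity assumption on $\sharp$ and use it, together with the computable uniform moduli of continuity guaranteed for elements of $\mathcal{F}$, to obtain $\varepsilon$ effectively from $\varphi$ and $\delta$. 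A secondary delicate point is the boundary case in the sat branch when $\dom(\varphi)$ is open; it is handled exactly as in Case~1 of the decidability proof by perturbing the boundary witness into the interior.
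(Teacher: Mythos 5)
Your proposal is correct and follows essentially the same route as the paper: the same choice of $\varepsilon$ from the uniform moduli of continuity of the $f_i$ over $B_0$, the same use of Lemma~\ref{key-lemma} together with (W2) for the $\mathsf{sat}$ branch and (W3) for the $\mathsf{unsat}$ branch, and the same style of per-condition counterexamples for the necessity direction. The one substantive point you add is to make explicit a hypothesis the paper leaves tacit: the step from ``$0\in\sharp f_i(B)$ with $||B||<\varepsilon$'' to ``$|f_i(\vec a)|\le\delta$ for all $\vec a\in B$'' is not justified by the bare inclusion property of interval extensions (which permits arbitrarily loose overapproximations), so your auxiliary convergence/tightness assumption on $\sharp$ --- or, equivalently, the existence of some $\vec y$ in or near $B$ with $f_i(\vec y)$ close to $0$ --- is genuinely needed for the paper's appeal to the modulus of continuity in the $\mathsf{sat}$ case to go through.
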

\begin{proof}
We consider an arbitrary bounded existential $\mathcal{L}_{\mathcal{F}}$-sentence containing only conjunctions, written as $\varphi: \exists^{\vec I}\vec x. \bigwedge_{i=1}^m f_i(\vec x) = 0$. Let $B_0 = \vec I$ be the initial bounding box. 

Since all the functions in $\varphi$ are computable over $B_0$, each $f_i$ has a uniform modulus of continuity over $B_0$, which we write as $m_{f_i}$. Choose any $k\in \mathbb{N}$ such that $2^{-k}<\delta$. Then for any $\varepsilon_i<m_{f_i}(k)$, we have 
%\vspace{-.3cm}
\begin{eqnarray}\label{lp}
\forall \vec x, \vec y\in B_0, ||\vec x-\vec y||<\varepsilon_i\rightarrow |f_i(\vec x)-f_i(\vec y)|<\delta.
\end{eqnarray}
We now fix $\varepsilon$ to be any positive rational number smaller than $\min(\varepsilon_1,...,\varepsilon_m)$.
 
By the previous lemma, we know ICP$_{\varepsilon}$ terminates and returns either {\sf sat} or {\sf unsat}. We now prove the two directions of the biconditional.

$\Leftarrow$: Suppose the pruning operator in ICP$_{\varepsilon}$ is well-defined.

Suppose ICP$_{\varepsilon}$ returns ``$\delta$-sat'', then by Lemma \ref{key-lemma}, there exist $B, B'\subseteq B_0$ such that $B = \mathrm{Prune}(B',f_1,...,f_m)$ and $||B'||<\varepsilon$. Then by the (W2), we know that $0\in \sharp f_i(B_{n})$ for every $f_i$. Now, by the definition of $\varepsilon$, we know from (\ref{lp}) that for every $i$, $\forall \vec a\in B, |f_i(\vec a)-0|<\delta.$ Namely, any $\vec a\in B$ is a witness for $\varphi^{\delta}: \exists^{\vec I} \vec x\ |f(\vec x)|<\delta$. Thus the $\delta$-weakening of $\varphi$ is true. 

Suppose ICP$_{\varepsilon}$ returns ``unsat''. Suppose $\varphi$ is in fact satisfiable. Then there is a point $\vec a\in B_0$ such that $\psi(\vec a)$ is true. However, following Lemma~\ref{key-lemma}, $\vec a\in B$ for some $B\subseteq B_0$ and $\mathrm{Prune}(B_0,f_1,...,f_m) = \emptyset$. However, this contradicts condition (W3) of the pruning operator. 

$\Rightarrow$: We only need to show that without any one of the three conditions in Definition~\ref{well}, we can define a pruning operator that fails $\delta$-completeness. 

Without (W1), we define a pruning operator that always outputs intervals bigger than $\varepsilon$ (such as the initial intervals). Then the procedure never terminates. Note that the other two conditions are trivially satisfied in this case (for any $f$ and $B_0$ satisfying $0\in \sharp f(B_0)$). Without (W2), consider the function $f(x)=x^2+1$ with $x\in [-1,1]$. We can define a pruning operator such that $\mathrm{Prune}([-1,1],f) = [1,1]$. This operator satisfies the other two conditions. However, the returned result $[1,1]$ fails $\delta$-completeness for any $\delta$ smaller than 2, since $f(1) = 2$. Without (W3), we simply prune any set to $\emptyset$ and always return {\sf unsat}. This violates $\delta$-completeness, which requires that if {\sf unsat} is returned the formula must be indeed unsatisfiable. The other two conditions are also satisfied in this case.  
\qed
\end{proof}

In practice, pruning operators are defined based on {\em consistency conditions} from constraint propagation techniques. Many pruning operators are used in practice~\cite{handbookICP}. Following Theorem~\ref{main-theorem}, we only need to prove their well-definedness to ensure $\delta$-completeness. For instance:
%For instance, a well-studied notion is called {\em arc-consistency}. 
%\begin{example}[Arc-Consistency]
%Let $f(x_1,...,x_n)=0$ be a constraint over $\mathbb{R}^n$. We say the sets $D_1,...,D_n\subseteq\mathbb{R}$ are arc-consistent with $f(\vec x)=0$, if for every $i$, we have $D_i = D_i \cap \{a_i\in \mathbb{R}: \forall j\in [1, i-1]\cup[i+1,n], \exists a_j\in D_j.\ f(a_1,...,a_n)=0\}.$
%\end{example}
%To compute the arc-consistent sets, a pruning operator can be defined by the righthand side of the fixed-point equation above. However, this operator is not always computable using intervals with floating-point numbers since it involves computing the solution set of $f$ exactly. To relax this condition, the following used in practice.
\begin{definition}[Box-consistent Pruning~\cite{newton}]
We say $\pi_B: \mathbb{BF}\times\mathcal{F}\rightarrow \mathbb{BF}$ is box-consistent, if for all $f\in \mathcal{F}$ and $B= I_1\times \cdots \times I_n \subseteq \dom(f)$, the $i$-th interval of $\pi_B(B,f)$ is $I_i\cap \mathrm{Hull}\big(\{a_i\in \mathbb{R}: 0\in \sharp f(I_1,...,\mathrm{Hull}(\{a_i\}),..., I_n\}\big).$
\end{definition}
%%\vspace{-.5cm}
\begin{proposition}
The Box-consistent Pruning operator is well-defined.
\end{proposition}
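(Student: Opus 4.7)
The plan is to check each of the three well-definedness conditions (W1), (W2), (W3) of Definition~\ref{well} directly from the definition of $\pi_B$, namely $\pi_B(B,f) = J_1 \times \cdots \times J_n$ with $J_i = I_i \cap \mathrm{Hull}(S_i)$ and $S_i = \{a \in \mathbb{R} : 0 \in \sharp f(I_1, \ldots, \mathrm{Hull}(\{a\}), \ldots, I_n)\}$.

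Conditions (W1) and (W3) should be the easy ones. (W1) is immediate from the shape of the definition, since each $J_i$ is an intersection with $I_i$, so $\pi_B(B,f) \subseteq B$. For (W3), I would fix any $\vec a \in B \cap Z_f$; then for each coordinate $i$, the point $a_i$ lies in $\mathrm{Hull}(\{a_i\})$ and $a_j \in I_j$ for every $j \neq i$, so the defining property of an interval extension forces $f(\vec a) \in \sharp f(I_1, \ldots, \mathrm{Hull}(\{a_i\}), \ldots, I_n)$. Because $f(\vec a) = 0$, this membership says exactly that $a_i \in S_i \subseteq \mathrm{Hull}(S_i)$, and combining with $a_i \in I_i$ gives $a_i \in J_i$; hence $\vec a \in \pi_B(B,f)$.

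The substantive step, and what I expect to be the main obstacle, is (W2): assuming $\pi_B(B,f)$ is nonempty, show $0 \in \sharp f(\pi_B(B,f))$. The extra tool I would invoke is inclusion monotonicity of the natural interval extension $\sharp$, namely $B' \subseteq B \Rightarrow \sharp f(B') \subseteq \sharp f(B)$, which is a standard feature of the interval arithmetic illustrated in the earlier example. The plan is to view the pruned box (at least after iteration to a fixed point) as a box-consistent sub-box of $B$: nonemptiness of each $J_i$ forces $S_i$ nonempty, and I would select for each $i$ a witness $a_i \in J_i$ satisfying $0 \in \sharp f(J_1, \ldots, \mathrm{Hull}(\{a_i\}), \ldots, J_n)$; the inclusion $\mathrm{Hull}(\{a_i\}) \subseteq J_i$ together with inclusion monotonicity then delivers $0 \in \sharp f(J_1, \ldots, J_n) = \sharp f(\pi_B(B,f))$. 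The technical subtlety I foresee is precisely the witness-selection step: because $\mathrm{Hull}$ may round its argument outward to the nearest floating-point endpoints and the outer intersection with $I_i$ can carve off part of $\mathrm{Hull}(S_i)$, one must argue that the intersection still retains a point of $S_i$, or at least a limit point thereof. This is where the closedness of $S_i$ inherited from continuity of $\sharp$ in its interval arguments, together with the exact floating-point definition of $\mathrm{Hull}$, do the real work.
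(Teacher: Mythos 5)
The paper states this proposition without proof, so there is nothing to compare line by line; judging your argument on its own terms, (W1) and (W3) are correct and exactly as one would expect (the intersection with $I_i$ gives (W1); the interval-extension containment applied to the box $I_1\times\cdots\times\mathrm{Hull}(\{a_i\})\times\cdots\times I_n$ gives (W3)). The problem is (W2), and the gap is not quite where you located it. When you ``select for each $i$ a witness $a_i$ satisfying $0\in\sharp f(J_1,\ldots,\mathrm{Hull}(\{a_i\}),\ldots,J_n)$,'' you have silently replaced the $I_j$ by the pruned $J_j$: membership of $a_i$ in $S_i$ only certifies $0\in\sharp f(I_1,\ldots,\mathrm{Hull}(\{a_i\}),\ldots,I_n)$ with the \emph{original} intervals in the other coordinates. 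Inclusion monotonicity runs the wrong way to repair this --- from $\mathrm{Hull}(\{a_i\})\subseteq J_i\subseteq I_i$ and $J_j\subseteq I_j$ you can only conclude $0\in\sharp f(B)$, not $0\in\sharp f(\pi_B(B,f))$. Separately, note that the paper's definition of an interval extension requires only range containment, not inclusion monotonicity, so even that tool is an added hypothesis.

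The witness-selection issue you flag is also not a removable technicality under the definition as literally written: with $a_i$ ranging over all of $\mathbb{R}$, take $f(x)=x^2-1$ and $I=[-\tfrac12,\tfrac12]$. Then $S=\{-1,1\}$, $\mathrm{Hull}(S)=[-1,1]$, so $\pi_B(I,f)=[-\tfrac12,\tfrac12]\neq\emptyset$, yet $\sharp f([-\tfrac12,\tfrac12])=[-1,-\tfrac34]\not\ni 0$, so (W2) fails outright. The repair is to read the definition as quantifying $a_i$ over $I_i$ (the standard formulation of box consistency, which makes $S=\emptyset$ in this example) and to verify (W2) only for boxes that are fixed points of the pruning --- which is all that Lemma~\ref{key-lemma} and Theorem~\ref{main-theorem} ever use. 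At a fixed point your parenthetical remark becomes the entire proof: $J_i=I_i$, so any $a_i\in S_i$ lies in $I_i$, and monotonicity applied to $\mathrm{Hull}(\{a_i\})\subseteq I_i$ alone gives $0\in\sharp f(I_1,\ldots,I_n)=\sharp f(\pi_B(B,f))$, with no $I$-versus-$J$ mismatch and no need to locate a point of $S_i$ inside a carved-down hull. I would restructure the (W2) argument around that observation rather than around closedness of $S_i$.
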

%\vspace{-.5cm}
\subsection{Handling ODEs}
In this section we expand our language to consider solutions of the initial value problems (IVP) of Lipschitz-continuous ODEs. Let $t_0, T\in \mathbb{R}$ and $g:\mathbb{R}^n\rightarrow \mathbb{R}$ be a Lipschitz-continuous function, i.e., for all $\vec x_1, \vec x_2\in\mathbb{R}^n$, $|g(\vec x_1)-g(\vec x_2)|\leq c||\vec x_1-\vec x_2||$ for some constant $c$. Let $t_0, T\in \mathbb{R}$ satisfy $t_0\leq T$ and $\vec y_0\in \mathbb{R}^n$. An IVP problem is given by 
%{\begin{eqnarray}\label{ivp}
$$\frac{d\vec y}{dt} = g(\vec y(t))\mbox{ and } \ \vec y(t_0) = \vec y_0, \mbox{ where }t\in [t_0, T].$$
%\end{eqnarray}}
where $\vec y: [t_0, T]\rightarrow \mathbb{R}^n$ is called the {\em solution} of the IVP. Consider $\vec y(t)$ as $(y_1(t),...,y_n(t))$, then each component $y_i: [t, T]\rightarrow \mathbb{R}$ is a Type 2 computable function, and can appear in some signature $\mathcal{F}$. In fact, we can also regard $\vec y_0$ as an argument of $y_i$ and write $y_i(t_0, \vec y_0)$. This does not change computability properties of $y_i$, since following the Picard-Lindel\"of representation $\vec y(t) = \int_{t_0}^t g(\vec y(s))ds + \vec y_0$, $y_i(t)$ is only linearly dependent on $\vec y_0$. 

In practice, with an ICP framework, we can exploit interval solvers for IVP problems~\cite{DBLP:journals/amc/NedialkovJC99}, for pruning intervals on variables that appear in constraints involving ODEs. This direction has received much recent attention~\cite{DBLP:conf/sefm/EggersRNF11,DBLP:conf/atva/EggersFH08,DBLP:conf/cp/GoldsztejnMEH10,DBLP:journals/sttt/IshiiUH11}. 
%We aim to extend our formal analysis for showing $\delta$-completeness of solvers in this domain.

%\begin{definition}%[Interval-Based ODE Solving]\label{ode-solve}
Consider the IVP problem defined above, with $\vec y_0$ contained in a box $B_{t_0}\subseteq \mathbb{R}^n$. Let $t_0\leq t_1\leq ...\leq t_m = T$ be a set of points in $[t_0, T]$. An interval-based ODE solver returns a set of boxes $B_{t_1},...,B_{t_m}$ such that 
%\begin{eqnarray*}
$$\forall i\in \{1,...,m\},\; \vec [y(t_i; B_{t_0})] = \{\vec y(t): t_0\leq t\leq t_i, \vec y_0\in B_{\vec y_0}\}\subseteq B_{t_i}.$$
%\end{eqnarray*}
%\end{definition}
%This is the guaranteed behavior of practical interval-based solvers such as~\cite{DBLP:journals/amc/NedialkovJC99}. 
%It is clear that this gives an interval extension of the solutions of the ODEs.
%\begin{proposition}
Now let $y_i: [t_0, T]\times B_0 \rightarrow \mathbb{R}$ be the $i$-th component of the solution $\vec y$ of an IVP problem. Then interval-based ODE solvers compute interval extensions of $y_i$. 
%\end{proposition}
Thus, pruning operators that respect the interval extension computed by interval ODE solvers can be defined. It can be concluded from Theorem~\ref{main-theorem} that ICP$_{\varepsilon}$ is $\delta$-complete for equalities involving ODEs, as long as the pruning operator is well-defined. A simplest strategy is just to prune out any set of points outside the interval extension:
\begin{proposition}[Simple ODE-Pruning]
Let $y_i(t,\vec y_0)$ be the $i$-th component function of an IVP problem. Suppose $\sharp y_i$ is computed by an interval ODE solver. Then the pruning operator $\mathrm{Prune}(I, y_i) = I\cap \sharp y_i(I_t, B_{\vec y_0})$ is well-defined. 
\end{proposition}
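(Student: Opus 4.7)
The plan is to verify directly the three conditions (W1)--(W3) of Definition~\ref{well} for the operator $\mathrm{Prune}(I, y_i) = I \cap \sharp y_i(I_t, B_{\vec y_0})$. First I would fix notation: since the well-definedness framework is phrased for equations ``$f(\vec x) = 0$'', I would recast the ODE constraint as $y_i(t,\vec y_0) - z = 0$ in an auxiliary variable $z$, so that $I$ is the current enclosure on $z$ while $I_t$ and $B_{\vec y_0}$ are the current enclosures on $t$ and on the initial condition $\vec y_0$. Only the $z$-component of the overall box is actually modified by this pruning.

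With this setup, (W1) is immediate from $I \cap \sharp y_i(I_t, B_{\vec y_0}) \subseteq I$. For (W3), I would take any zero $(t^*, \vec y_0^*, z^*)$ of $y_i(t,\vec y_0) - z$ that lies in $I_t \times B_{\vec y_0} \times I$; since $z^* = y_i(t^*, \vec y_0^*)$ belongs to the range $\{y_i(t, \vec y_0): t \in I_t, \vec y_0 \in B_{\vec y_0}\}$, the defining property of interval extensions gives $z^* \in \sharp y_i(I_t, B_{\vec y_0})$, hence $z^* \in I \cap \sharp y_i(I_t, B_{\vec y_0}) = \mathrm{Prune}(I, y_i)$. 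For (W2), assuming $I' := \mathrm{Prune}(I, y_i)$ is nonempty, I would use $I' \subseteq \sharp y_i(I_t, B_{\vec y_0})$ to conclude by interval subtraction that $0 \in \sharp y_i(I_t, B_{\vec y_0}) - I'$, which is exactly the interval extension of $y_i(t, \vec y_0) - z$ on the pruned box.

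The main obstacle is really notational rather than mathematical: one must reconcile the single-function presentation $\mathrm{Prune}(I, y_i)$ in the proposition with the two-argument $\mathrm{Prune}_{\sharp}(B, f)$ of Definition~\ref{well}, where $f$ is an equation ``$= 0$''. Once the auxiliary variable $z$ is introduced so that the constraint reads $y_i - z = 0$, each of (W1)--(W3) collapses to a one-line consequence of the defining inclusion $\{f(\vec x): \vec x \in B\} \subseteq \sharp f(B)$ of interval extensions, with no further properties of the ODE solver needed beyond its production of a valid enclosure $\sharp y_i$.
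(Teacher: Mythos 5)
Your verification is correct, and it is the intended argument: the paper states this proposition without proof, and the expected justification is exactly the routine check of (W1)--(W3) from the defining inclusion $\{f(\vec x):\vec x\in B\}\subseteq \sharp f(B)$, with the constraint read as $y_i(t,\vec y_0)-z=0$ so that $I$ is the enclosure of the auxiliary variable $z$. Your handling of (W2) does quietly assume that the interval extension of $y_i - z$ on the pruned box is the natural one, $\sharp y_i(I_t,B_{\vec y_0}) - I'$, but that is the standard convention here and you flag the issue, so the argument stands.
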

%\vspace{-.5cm}
\subsection{DPLL$\langle$ICP$\rangle$}
Now consider the integration of ICP into the framework of DPLL(T), so that the full $\delta$-SMT problem can be solved. Given a formula $\varphi$, a DPLL$\langle$ICP$\rangle$ solver uses SAT solvers to enumerate solutions to the Boolean abstraction $\varphi^B$ of the formula, and uses ICP$_{\varepsilon}$ to decide the satisfiability of conjunctions of atomic formulas. %(More detailed description is given in the Appendix.) 
DPLL$\langle$ICP$\rangle$ returns {\sf sat} when ICP$_{\varepsilon}$ returns {\sf sat} to some conjunction of theory atoms witnessing the satisfiability of $\varphi^B$, and returns {\sf unsat} when ICP$_{\varepsilon}$ returns {\sf unsat} on all the solutions to $\varphi^B$. Thus, it follows naturally that using a $\delta$-complete theory solver ICP$_{\varepsilon}$, DPLL$\langle$ICP$\rangle$ is also $\delta$-complete. 

\begin{corollary}[$\delta$-Completeness of DPLL$\langle$ICP$\rangle$]
Let $\mathcal{F}$ be a set of real functions. Then the pruning operators in ICP$_{\varepsilon}$ are well-defined for $\mathcal{F}$, if and only if, DPLL$\langle$ICP$\rangle$ using ICP$_{\varepsilon}$ is $\delta$-complete for bounded $\Sigma_1$-sentences in $\mathcal{L}_{\mathcal{F}}$.
\end{corollary}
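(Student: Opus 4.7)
The plan is to lift Theorem~\ref{main-theorem}, which handles purely conjunctive $\Sigma_1$-sentences, to arbitrary bounded $\Sigma_1$-sentences via the DPLL(T) abstraction layer. In the standard form $\varphi:\exists^{\vec I}\vec x.\bigwedge_{i=1}^m\bigvee_{j=1}^{k_i} f_{ij}(\vec x)=0$, each satisfying assignment of the Boolean abstraction $\varphi^B$ picks, for each clause $i$, at least one disjunct $j(i)$, and ICP$_\varepsilon$ is then handed the conjunctive subproblem $\varphi_{j}:\exists^{\vec I}\vec x.\bigwedge_{i=1}^m f_{i,j(i)}(\vec x)=0$. There are only finitely many such choice functions $j$, so one can pick a single $\varepsilon$ that satisfies the conclusion of Theorem~\ref{main-theorem} simultaneously for every $\varphi_j$ (take the minimum of the $\varepsilon$'s produced in each case). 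The key logical observation, immediate from Definition~\ref{weak-def}, is that CNF--DNF distribution yields $\varphi \equiv \bigvee_j \varphi_j$ and $\varphi^{\delta} \equiv \bigvee_j \varphi_j^{\delta}$, so $\delta$-weakening commutes with the Boolean enumeration carried out by the SAT layer.

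For the $\Leftarrow$ direction, assume the pruning operator is well-defined. If DPLL$\langle$ICP$\rangle$ returns $\delta$-$\mathsf{sat}$ then ICP$_{\varepsilon}$ reported $\delta$-$\mathsf{sat}$ on some enumerated $\varphi_j$; Theorem~\ref{main-theorem} gives $\varphi_j^{\delta}$ true, hence $\varphi^{\delta}$ true. If it returns $\mathsf{unsat}$ then ICP$_{\varepsilon}$ reported $\mathsf{unsat}$ on every Boolean solution of $\varphi^B$; Theorem~\ref{main-theorem} then gives that each corresponding $\varphi_j$ is genuinely false, so their disjunction $\varphi$ is false. This establishes $\delta$-completeness of DPLL$\langle$ICP$\rangle$.

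For the $\Rightarrow$ direction, notice that a conjunctive $\Sigma_1$-sentence is already a bounded $\Sigma_1$-sentence whose Boolean abstraction has a unique (trivial) satisfying assignment, so DPLL$\langle$ICP$\rangle$ on such inputs reduces to a single call to ICP$_{\varepsilon}$. Hence $\delta$-completeness of DPLL$\langle$ICP$\rangle$ over all bounded $\Sigma_1$-sentences restricts to $\delta$-completeness of ICP$_{\varepsilon}$ on conjunctive ones, and the $\Rightarrow$ half of Theorem~\ref{main-theorem} then forces the pruning operator to be well-defined. The main obstacle in the whole argument is purely conceptual: one must be careful that $\delta$-weakening distributes properly across the Boolean structure and that $\varepsilon$ can be chosen uniformly over the finitely many conjunctive subproblems; once these points are nailed down, the proof is essentially bookkeeping on top of Theorem~\ref{main-theorem}.
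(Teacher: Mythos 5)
Your proposal is correct and follows essentially the same route as the paper: choose one $\varepsilon$ uniformly for all $f_{ij}$ occurring in $\varphi$, reduce each satisfying assignment of $\varphi^B$ to a conjunctive subproblem handled by Theorem~\ref{main-theorem}, and obtain the converse by restricting to conjunctive sentences, where DPLL$\langle$ICP$\rangle$ degenerates to a single ICP$_{\varepsilon}$ call. Your explicit remark that $\varphi^{\delta}\equiv\bigvee_j\varphi_j^{\delta}$ (so $\delta$-weakening commutes with the Boolean enumeration) is a useful detail the paper leaves implicit, but it does not change the argument.
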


\begin{proof}
Let $\varphi$ be a bounded SMT problem $\exists^{\vec I}\vec x \bigwedge_i\bigvee_j f_{ij}(\vec x)=0.$
Its Boolean abstraction $\varphi^B$ is given by $\bigwedge_i\bigvee_j p_{ij}$, where $p_{ij}$ is the propositional abstraction of $f_{ij}(\vec x)=0$. 

Choose $\varepsilon$ to satisfy that $\forall \vec x, \vec y\in \vec I |f_{ij}(\vec x)-f_{ij}(\vec y)|<\delta$ for all $f_{ij}$ that appear in the $\varphi$.

Now, in the DPLL(T) framework, the SAT solver returns an assignment to $p_{ij}$ such that $\varphi^B$ evaluates to true, then ICP$_{\varepsilon}$ is used for checking the satisfiability of the corresponding conjunction of theory atoms. It is important to note that $\varphi^B$ does not contain negations. 

Suppose the pruning operator in ICP$_{\varepsilon}$ is well-defined. Then ICP$_{\varepsilon}$ is $\delta$-complete. Now, suppose DPLL$\langle$ICP$\rangle$ returns {\sf sat}. Then $\varphi^B$ is true witnessed by a set $\{p_1,...,p_m\}$ assigned to true, which in turn corresponds to a set $\{f_1(\vec x)=0,...,f_m(\vec x)=0\}$ of the theory atoms. By $\delta$-completeness of ICP$_{\varepsilon}$, we know that $\varphi^{\delta}$ is true. On the other hand, suppose $\varphi$ is decided as {\sf unsat}. Then either there is no assignment such that $\varphi^B$ is true, or for each satisfying assigment to $\varphi^B$, ICP$_{\varepsilon}$ decides that the corresponding set of theory atoms is not satisfiable. By $\delta$-completeness of ICP$_{\varepsilon}$, the {\sf unsat} answers are always correct. In all, DPLL$\langle$ICP$\rangle$ is also $\delta$-complete. 

Suppose the pruning operator in ICP$_{\varepsilon}$ is not well-defined, then DPLL$\langle$ICP$\rangle$ is simply not $\delta$-complete for conjunctions of theory atoms, and thus not $\delta$-complete for bounded SMT in $\mathcal{L}_{\mathcal{F}}$. 
\qed\end{proof}

\section{Applications}\label{app}
%\vspace{-.2cm}
%\subsection{Applications of $\delta$-Complete Procedures}
$\delta$-Complete solvers return answers that allow one-sided, $\delta$-bounded errors. The framework allows us to easily understand the implications of such errors in practical problems. Indeed, $\delta$-complete solvers can be {\em directly} used in the following correctness-critical problems. 
%\vspace{-.3cm}
\paragraph{Bounded Model Checking and Invariant Validation.} Let $S=\langle X, \mathsf{Init}, \mathsf{Trans}\rangle$ be a transition system over $X$, which can by continuous or hybrid. Then given a subset $U\subseteq X$, the bounded model checking problem asks whether $\varphi_n:= \exists \vec x_0,...,\vec x_n(\mathsf{\vec x_0}\wedge \bigwedge_{i=0}^{n-1} \mathsf{Trans}(\vec x_i, \vec x_{i+1}) \wedge \vec x_n\in U)$ is true. Here $U$ denotes the ``unsafe'' values of the system, and we say $S$ is safe up to $n$ if $\varphi_n$ is false. Thus, using a $\delta$-complete solver for $\varphi_n$, we can determine the following: If $\varphi_n$ is {\sf unsat}, then $S$ is indeed safe up to $n$; on the other hand, if $\varphi_n$ is {\sf $\delta$-sat}, then either the system is unsafe, or it would be unsafe under a $\delta$-perturbation, and a counterexample is provided by the certificate for {\sf $\delta$-sat}. This $\delta$ can be set by the user based on the intended tolerance of errors of the system. Thus, a $\delta$-complete solver can be directly used. 

For invariant validation, a proposed invariant $\mathsf{Inv}$ can prove safety if the sentence $\varphi:=\forall\vec x,\vec x'((\mathsf{Init}(\vec x)\rightarrow \mathsf{Inv}(\vec x))\wedge (\mathsf{Inv}(\vec x)\wedge\mathsf{Trans}(\vec x, \vec x')\rightarrow\mathsf{Inv}(\vec x'))\wedge \mathsf{Inv}(\vec x)\rightarrow \neg(U(\vec x)))$ is true. We then use a $\delta$-complete solver on $\neg\varphi$, which is existential. When {\sf unsat} is returned, $\mathsf{Inv}$ is indeed an inductive invariant proving safety. When {\sf $\delta$-sat} is returned, either $\mathsf{Inv}$ is not an inductive invariant, or under a small numerical perturbation, $\mathsf{Inv}$ would violate the inductive conditions.
%\vspace{-.4cm}
\paragraph{Theorem Proving.} For theorem proving, one-sided errors are not directly useful since no robustness problem is involved. We can still approach a statement $\varphi$ by making $\delta$-decisions on $\neg\varphi$, and refine $\delta$ when needed. Starting from any $\delta$, whenever {\sf unsat} is returned, $\varphi$ is proved; when {\sf $\delta$-sat}, we can try a smaller $\delta$. This reflects the common practice in proving these statements. 
%Also, proofs can be extracted from certified solvers when $\varphi$ is proved in this manner. 
%\vspace{-.8cm}

%\subsection{Implementation}
%\vspace{-.1cm}
%In this paper we focus on the rigorous development of the framework of $\delta$-completeness. As mentioned above, there now exists several numerically-driven solvers~\cite{BorrallerasLNRR09,DBLP:conf/fmcad/NuzzoPSS10,HySAT,DBLP:conf/atva/EggersFH08,DBLP:conf/sefm/EggersRNF11,DBLP:conf/fmcad/Gao10,bern,cordic} with promising results on various nonlinear problems. What we have proposed is a rationale for ensuring their applicability in correctness-critical problems. 
%We are developing our tool {\sf dReal}, implementing the DPLL$\langle$ICP$\rangle$ framework for solving problems in the signatures discussed in the paper. The emphasis is on the $\delta$-completeness of the solver: We provide certificates for {\sf $\delta$-sat} and {\sf unsat} answers as described above. Our tool is under active development, and the current prototype version, benchmarks, and experimental data are available at~\cite{dReal}. We have made use of many existing packages, including MiniSAT~\cite{een03minisat}, OpenSMT~\cite{DBLP:conf/tacas/BruttomessoPST10}, RealPaver~\cite{DBLP:journals/toms/GranvilliersB06}, and VNODE~\cite{DBLP:journals/amc/NedialkovJC99}. 
%\vspace{-.2cm}
\section{Conclusion}
%\vspace{-.1cm}
We introduced the notion of ``$\delta$-complete decision procedures'' for solving SMT problems over real numbers. Our aim is to provide a general framework for solving a wide range of nonlinear functions including transcendental functions and solutions of Lipschitz-continuous ODEs. $\delta$-Completeness serves as a replacement of the conventional completeness requirement on exact solvers, which is impossible to satisfy in this domain. We proved the existence of $\delta$-complete decision procedures for bounded SMT over reals with Type 2 computable functions and showed the complexity of the problem. We use $\delta$-completeness as the standard correctness requirement on numerically-driven decision procedures, and formally analyzed the solving framework DPLL$\langle$ICP$\rangle$. We proved sufficient and necessary conditions for its $\delta$-completeness. We believe our results serve as a foundation for the development of scalable numerically-driven decision procedures and their application in formal verification and theorem proving.

\section*{Acknowledgement}
We are grateful for many important suggestions from Lenore Blum and the anonymous reviewers. 

%\vspace{-.5cm}
\bibliographystyle{abbrv}
\bibliography{tau}
\end{document}